\tikzstyle{vertex}=[circle, draw, fill, inner sep=0pt, minimum size=5pt]
\newcommand\abs[1]{\left\lvert{#1}\right\rvert}
\newcommand\ceil[1]{\left\lceil{#1}\right\rceil}
\newcommand\norm[1]{\left\lVert{#1}\right\rVert}
\newcommand\wtd[1]{\widetilde{#1}}
\DeclareMathOperator{\gap}{\mathrm{gap}}
\DeclareMathOperator{\Null}{\mathrm{Null}}
\def\Am{\mathbf{A}}
\def\Bm{\mathbf{B}}
\def\Lm{\mathbf{L}}
\def\R{\mathbb{R}}
\def\sE{\mathcal{E}}
\def\sB{\mathcal{B}}
\def\trans{{^\top}}
\def\E{\mathbb{E}}
\def\bv{\mathbf{b}}
\def\xv{\mathbf{x}}
\def\fv{\mathbf{f}}
\def\rv{\mathbf{r}}
\def\uv{\mathbf{u}}
\def\vv{\mathbf{v}}
\def\yv{\mathbf{y}}
\def\zv{\mathbf{z}}
\def\one{\mathbbm{1}}
\def\poly{\mathrm{poly}}
\def\st{\mathrm{st}}
\def\ALGNAME{\textsf{Dual KOSZ}}
\def\OALGNAME{\textsf{KOSZ}}
\DeclareMathSymbol{\lsb@l}{\mathalpha}{letters}{`l}
\newtheorem*{claim*}{Claim}
\newtheorem{lemma}{Lemma}
\newtheorem*{lemma*}{Lemma}
\newtheorem{prop}{Proposition}
\newtheorem*{prop*}{Proposition}
\newtheorem{theorem}{Theorem}
\newtheorem*{theorem*}{Theorem}
\newtheorem{defn}{Definition}
\newtheorem*{defn*}{Definition}
\newtheorem*{convention*}{Convention}
\newtheorem{fact}{Fact}
\newcommand*\samethanks[1][\value{footnote}]{\footnotemark[#1]}
\theoremstyle{plain}
\newenvironment{remark}{\noindent{\bf Remark}\hspace*{1em}}{\bigskip}
\begin{document}

\title{A Combinatorial Cut-Toggling Algorithm for Solving Laplacian Linear Systems}


\author{Monika Henzinger\thanks{Institute of Science and Technology Austria. {\tt monika.henzinger@ista.ac.at}} \and  Billy Jin\thanks{School of Operations Research and Information Engineering, Cornell University. {\tt \{bzj3, davidpwilliamson\}@cornell.edu}.} \and Richard Peng\thanks{Cheriton School of Computer Science, University of Waterloo.  Email: {\tt y5peng@uwaterloo.ca}. } \and David P.\ Williamson\samethanks[2]}








\maketitle

\begin{abstract}
Over the last two decades, a significant line of work in theoretical algorithms has made progress in solving linear systems of the form $\mathbf{L}\mathbf{x} = \mathbf{b}$, where $\mathbf{L}$ is the Laplacian matrix of a weighted graph with weights $w(i,j)>0$ on the edges.  The solution $\mathbf{x}$ of the linear system can be interpreted as the potentials of an electrical flow in which the resistance on edge $(i,j)$ is $1/w(i,j)$.  Kelner, Orrechia, Sidford, and Zhu \cite{KOSZ13} give a combinatorial, near-linear time algorithm that maintains the Kirchoff Current Law, and gradually enforces the Kirchoff Potential Law by updating flows around cycles ({\it cycle toggling}).  

In this paper, we consider a dual version of the algorithm that maintains the Kirchoff Potential Law, and gradually enforces the Kirchoff Current Law by {\it cut toggling}: each iteration updates all potentials on one side of a fundamental cut of a spanning tree by the same amount. We prove that this dual algorithm also runs in a near-linear number of iterations.  

We show, however, that if we abstract cut toggling  as a natural data structure problem, this problem can be reduced to the online vector-matrix-vector problem (OMv), which has been conjectured to be difficult for dynamic algorithms \cite{HKNS15}.  The conjecture implies that the data structure does not have an $O(n^{1-\epsilon})$ time algorithm for any $\epsilon > 0$, and thus a straightforward implementation of the cut-toggling algorithm requires essentially linear time per iteration. 

To  circumvent the lower bound, we batch update steps, and perform them simultaneously instead of sequentially.  An appropriate choice of batching leads to an $\widetilde{O}(m^{1.5})$ time cut-toggling algorithm for solving Laplacian systems.  Furthermore, we show that if we sparsify the graph and call our algorithm recursively on the Laplacian system implied by batching and sparsifying, we can reduce the running time to $O(m^{1 + \epsilon})$ for any $\epsilon > 0$. Thus, the dual cut-toggling algorithm can achieve (almost) the same running time as its primal cycle-toggling counterpart. 
\end{abstract}


\section{Introduction}

Over the last two decades, a significant line of work in theoretical algorithms has made progress in solving linear systems of the form $\mathbf{L}\mathbf{x} = \mathbf{b}$, where $\mathbf{L}$ is the Laplacian matrix of a weighted graph with weights $w(i,j)>0$ on the edges. Starting with the work of Spielman and Teng \cite{SpielmanT04}, researchers have devised a number of algorithms that run in near-linear time in the number of edges of the graph (corresponding to the number of non-zeros in the matrix $\mathbf{L}$).  The solution $\mathbf{x}$ of the linear system can be interpreted as the potentials of an electrical flow in which the resistance of each edge is $r(i,j)=1/w(i,j)$, and the current supplied to each node $i$ is $b(i)$.  There have been many nice algorithmic ideas introduced using this interpretation of the linear system, as well as many applications of the fast algorithms for solving this system to other flow problems, such as the maximum flow problem
\cite{CKLPPS_max,ChristianoKMST11,Kathuria2022Unit,Madry16,Gao2021fully}.

Since the initial work of Spielman and Teng, a number of different near-linear time algorithms have been proposed.  In this paper, we wish to focus on a particular simple, combinatorial algorithm by Kelner, Orrechia, Sidford, and Zhu \cite{KOSZ13}, hereafter referred to as the \OALGNAME\ algorithm; this algorithm has been the subject of several implementation studies \cite{BomanDG15,BomanDG16,DeweeseGMPXX16,HoskeLMW16}.  The \OALGNAME\ algorithm uses the idea of an \emph{electrical flow} in solving the linear system $\mathbf{L}\mathbf{x} = \mathbf{b}$.  An electrical flow $\mathbf{f}$ is one that obeys the flow conservation constraints at each node $i$, saying that the net flow out of $i$ is $b(i)$ (sometimes known in this context as the Kirchoff Current Law, or KCL), and Ohm's law, which says that there exists a vector $\mathbf{x}$ such that the flow from $i$ to $j$, $f(i,j)$, equals $(x(i) - x(j))/r(i,j)$.  There exists a flow $\mathbf{f}$ that satisfies these two properties, and the corresponding potential vector $\mathbf{x}$ solves $\mathbf{L} \mathbf{x} = \mathbf{b}$.  Ohm's Law is known to be equivalent to the Kirchoff Potential Law (KPL), which says that the flow $\mathbf{f}$ satisfies the property that around any directed cycle $C$, $\sum_{(i,j) \in C} r(i,j) f(i,j) = 0$.  Given KPL, potentials satisfying Ohm's law can be constructed by picking any spanning tree $T$ rooted at a vertex $r$, setting $x(r)$ to 0, and $x(k)$ to the sum of $f(i,j)r(i,j)$ on the path in $T$ from $k$ to $r$; these potentials are known as {\em tree-induced potentials}.

The \OALGNAME\ algorithm starts by picking a spanning tree $T$; for the running time of the algorithm, it is important that the tree has low {\em stretch}; its definition is otherwise not crucial to the description of the algorithm.
The algorithm starts by constructing a flow $\mathbf{f}$ that satisfies flow conservation using only the edges in $T$.   For a near-linear number of iterations, the algorithm picks at random a non-tree edge $(i,j)$ and considers the fundamental cycle $C$ closed by adding $(i,j)$ to $T$; it then alters the flow $\mathbf{f}$ along $C$ to satisfy KPL (and such that KCL continues to be satisfied).  By picking $(i,j)$ with the appropriate probability, Kelner et al.\ show that the energy of the resulting flow decreases by a factor $1 - \frac{1}{\tau}$ in expectation, where $\tau$ is a parameter related to the stretch. The algorithm then returns the tree-induced potentials $\mathbf{x}$ associated with $T$ and the flow $\mathbf{f}$.  Kelner et al.\ \cite{KOSZ13} show that the resulting potentials are close to the potentials of the associated electrical flow for the graph.  The \OALGNAME\ algorithm has the pleasing properties that it is easy to understand both the algorithm and the analysis, and it is also close in spirit to known network flow algorithms; in particular, it resembles the primal network simplex algorithm for the minimum-cost flow problem.

The primal network simplex algorithm for the minimum-cost flow problem has a natural dual algorithm in the dual network simplex algorithm, in which node potentials are altered on one side of a fundamental cut of a tree (the cut induced by removing an edge in the tree).  Similarly, polynomial-time cycle-canceling algorithms for minimum-cost flow (e.g.\ Goldberg and Tarjan \cite{GoldbergT89}) have natural dual analogs of polynomial-time cut-cancelling algorithms (e.g.\ Ervolina and McCormick \cite{ErvolinaM93}).  We refer to the first type of algorithm as a {\em cycle-toggling} algorithm, and the dual analog as a {\em cut-toggling} algorithm. 
Thus, the \OALGNAME~algorithm is a cycle-toggling algorithm for solving Laplacian linear systems. However, no corresponding cut-toggling algorithm exists in the literature, leading immediately to the following question:
\begin{center}
{\emph{Does there exist a cut-toggling algorithm for solving Laplacian linear systems / computing near-minimum energy flows, and how efficiently can it be implemented?}}
\end{center}
\subsection{Our Contributions}
We propose a dual analog of the \OALGNAME\ algorithm which performs cut-toggling rather than cycle-toggling. We refer to this algorithm as \ALGNAME,
and we show it converges in a nearly-linear number of cut-toggling steps. Thus, \ALGNAME\ would be a nearly-linear time algorithm if each cut-toggling operation could be implemented to run in polylogarithmic time.

Our next contribution is to show that the natural data structure abstraction of this
cut-toggling process can be reduced to the
online matrix-vector (OMv) problem,
which has been conjectured to be hard~\cite{HKNS15}.
This implies that it is unlikely for there to be a black-box
data structure that implements a single cut-toggling operation
in sublinear time, unlike cycle toggling.

This result initially seems to present an insurmountable difficulty to obtaining a nearly-linear time algorithm. However, we show that we can exploit the offline nature of the
cut toggles, obtaining an exact data structure
for computing a sequence of $K$ cut-toggles in
$O(K\sqrt{m})$ time total, which yields an algorithm that runs in $\tilde{O}(m^{1.5})$ time overall.  Interestingly, Boman, Deweese, and Gilbert \cite{BomanDG16} explored an implementation of \OALGNAME\ that also batched its cycle-toggling updates by looking for collections of edge-disjoint cycles.

We further show that by incorporating sparsification and its associated
approximations, we can reduce the running time to almost-linear,
which means the cut toggling algorithm can still be implemented
in almost-linear time.

Our result demonstrates that graph optimization algorithms
and dynamic graph data structures can -- and sometimes need to -- interact in more intricate fashion than optimization in the outer loop and black-box data structure in the inner loop.

The remainder of the paper is structured as follows.  In Section \ref{sec:technical}, we give a high-level overview of \ALGNAME\ and our implementation ideas to obtain almost linear time.  Section \ref{sec:notation} describes some notation and concepts we will use.  Section \ref{sec:alg} gives the \ALGNAME\ in detail and shows that it can be implemented in a near-linear number of cut-toggling iterations, with each iteration running in linear time.  In Section \ref{sec:ds}, we abstract the problem of toggling a cut to a data structure problem, and show that given the OMv conjecture of \cite{HKNS15}, we cannot implement the operations needed in sublinear time.  In Section \ref{sec:speedup}, we show how to overcome this difficulty by batching the cut-toggle operations, and further speed up the algorithm through sparsification and recursion. 

\subsection{Technical Overview} \label{sec:technical}
As the \OALGNAME\ algorithm maintains a flow $\mathbf{f}$
and implicitly tree-induced potentials $\mathbf{x}$,
its natural dual is to maintain potentials $\mathbf{x}$,
which implicitly define a flow $\mathbf{f}$.\footnote{To make matters somewhat confusing, the Laplacian solver literature treats the space of potentials as {\em primal} due to its origins in numerical analysis.}
The \ALGNAME\ algorithm starts by choosing a low-stretch spanning tree $T$. It maintains a set of potentials $\mathbf{x}$ (initially zero), and the corresponding (infeasible) flow $\mathbf{f}$ implied by Ohm's Law.  In each iteration, we sample a fundamental cut $S$ of the tree $T$ and perform a cut-toggling update so that the net flow leaving $S$ is $\sum_{i \in S} b(i)$, as required in every feasible flow.  Following arguments dual to those made in Kelner et al.\ we show that this algorithm also performs a near-linear number of iterations in order to find a near-optimal set of potentials $\xv$  and flow $\fv$. 

\begin{restatable}{theorem}{dualkosz}
\label{thm:dual_kosz}
Let $\tau$ be the total stretch of $T$. After $K = \tau \ln(\frac\tau\epsilon)$ iterations, \ALGNAME\ returns $\xv^K \in \R^V$ and $\fv^K \in \R^{\vec{E}}$ such that $\E\norm{\mathbf{x}^* - \mathbf{x}^K}_\mathbf{L}^2 \leq \frac{\epsilon}{\tau}\norm{\mathbf{x}^*}_\mathbf{L}^2$ and $\E[\sE(\fv^K)] \leq (1+\epsilon) \sE(\fv^*)$, for $\fv^*$ and $\xv^*$ optimal primal and dual solutions respectively.
\end{restatable}
Here $\|\yv\|_{\Lm} = \sqrt{\yv^\trans \Lm \yv}$, and $\sE(\fv)$ is the energy of flow $\fv$.

However, unlike Kelner et al., we cannot show that each individual cut-toggling update can be made to run in polylogarithmic time. If we abstract the desired cut-toggling update step as a natural data structure problem,  we show that such a data structure cannot be implemented in $O(n^{1-\epsilon})$ time for any $\epsilon > 0$ given a conjecture about the {\em online matrix-vector multiplication problem (OMv)} made by Henzinger, Krinninger, Nanongkai and Saranurak \cite{HKNS15}. They have conjectured that this problem does not have any algorithm that can carry out an online sequence of $n$ Boolean matrix-vector multiplications in time $O(n^{3-\epsilon})$, and show that if the conjecture is false, then various long-standing dynamic graph problems will have faster algorithms.  We show that a single Boolean matrix-vector multiply can be carried out as a sequence of $O(n)$ operations of our desired data structure.  Given the conjecture, then, we cannot implement the data structure operations in $O(n^{1-\epsilon})$ time. Thus there is not a straightforward  near-linear time version of the \ALGNAME\  algorithm.\footnote{In a personal communication, Sherman \cite{Sherman17} said he also had worked out a dual version of the \OALGNAME\  algorithm, but was unable to solve the data structure problem for the updates to potentials.  Our result explains why this might be difficult to do.}

In a bit more detail, the data structure we define is given as input an undirected graph with a spanning tree $T$, edge resistances and a supply $b(v)$ and a potential $\mathsf{value}(v)$ at every vertex $v$. The data structure supports two operations: 1) additively update the value of all the vertices in a given subtree of $T$, and 2) query the value of the flow induced by the potential values across any given fundamental cut of $T$. We show that even if the data structure can only decide whether the value of the flow is non-zero or not, it would refute the OMV conjecture. Therefore, we obtain a lower bound for this data structure conditional on the OMv conjecture~\cite{HKNS15}. This even holds if all edge resistances are 1 and all $b(v)$ are 0.

Nevertheless, we circumvent this data structural lower bound  by exploiting the fact that the sequence of cuts to be updated can be sampled in advance and, thus, the updates can be batched, circumventing the ``online'' (or ``sequential'') requirement in OMv. This is possible because both the spanning tree $T$ and the probability distribution over cuts of $T$ are fixed at the beginning of the algorithm. More precisely, denote the number of iterations of \ALGNAME\ by $K$ (which is $\widetilde{O}(m)$). 
Instead of sampling the fundamental cuts one at a time, consider sampling the next $l$ cuts that need to be updated for some $l \ll K$. In each ``block" of size $l \ll K$, we contract all the edges of $T$ that do not correspond to one of the $l$ fundamental cuts to be updated. In this way, we work with a contracted tree of size $O(l)$ in each block (instead of the full tree, which has size $O(n)$). This makes the updates faster. However, the price we pay is that at the end of each block, we need to propagate the updates we made (which were on the contracted tree), back to the entire tree. Overall, we show that each block takes $O(l^2 + m)$ time. Since there are $\widetilde{O}(\frac{m}{l})$ blocks, the total runtime is $\wtd{O}(ml + \frac{m^2}{l})$.
Choosing $l=\sqrt{m}$ thus gives a $\tilde{O}(m^{1.5})$ time algorithm.  

By augmenting the batching idea with sparsification and recursion, one can further improve the running time of \ALGNAME\  to $\widetilde{O}(m^{1+\delta})$ for any $\delta > 0$.  To do this, observe that $l$ cut-toggling updates effectively break the spanning tree into $l+1$ components. After contracting the components to get a graph $H$ with $l+1$ vertices, we can show that solving an appropriate Laplacian system on $H$ gives a single update step that makes at least as much progress as the sequence of $l$ updates performed by the straightforward unbatched algorithm. A natural approach is to solve this Laplacian system by recursively calling the algorithm. However, this by itself does not give an improved running time. Instead, we first spectrally sparsify $H$ and then call the algorithm recursively to solve the \emph{sparsified} Laplacian system. Here we use a combinatorial  spectral sparsifier~\cite{kyng17sparsify} because it does not
require calling Laplacian solvers as a subroutine (e.g. \cite{BatsonSST13}).
By carefully analyzing the error incurred by sparsification, we are able to show that the update step using sparsification  makes about as much progress as the update step without sparsification. The total running time of the recursive algorithm is then obtained by bounding the time taken at each layer of the recursion tree.

\begin{restatable}{theorem}{fastest}
\label{thm:time}
For any $\delta \in (0,1)$ and $\epsilon > 0$, \ALGNAME\ with batching, sparsification, and recursion finds $\xv$ with $\E\norm{\mathbf{x}^* - \mathbf{x}}_\mathbf{L}^2 \leq \epsilon\norm{\mathbf{x}^*}_\mathbf{L}^2$ in 
$O(m^{1+\delta}(\log n)^{\frac{8}{\delta} }(\log \frac{1}{\epsilon})^{\frac{1}{\delta}})$ time.
\end{restatable}

\section{Notation and Problem Statement} \label{sec:notation}
In this section, we give some notation and define concepts we will use in our algorithm.  We are given as input an undirected graph $G = (V, E)$, with positive resistances $\mathbf{r} \in \R_+^E$. Although our graph is undirected, it will be helpful for us notationally to direct the edges. To that end, fix an arbitrary orientation of the edges, and denote the set of directed edges by $\vec{E}$.

In addition to $G$ and the resistances $\mathbf{r}$, we are given a \emph{supply vector} $\mathbf{b} \in \R^V$. 

The \emph{Laplacian matrix} of $G$ with respect to the resistances $\mathbf{r}$ is the matrix $\mathbf{L} \in \R^{V \times V}$ defined by 
$$\mathbf{L} = \sum_{ij \in E} \frac{1}{r(i,j)}(\mathbf{e_i} - \mathbf{e_j})(\mathbf{e_i} - \mathbf{e_j})\trans,$$
where $\mathbf{e_i}$ is the $i$th unit basis vector. We note then that $\mathbf{x}\trans \mathbf{L} \mathbf{x} = \sum_{(i,j) \in E} \frac{1}{r(i,j)}(x(i)-x(j))^2$ for any vector $\mathbf{x}$.

Our goal is to solve the system of linear equations $\mathbf{L}\mathbf{x} = \mathbf{b}$ for $\mathbf{x}$. However, we will not be able to solve $\mathbf{L}\mathbf{x} = \mathbf{b}$ exactly, so we will solve it approximately instead. It is usual to measure the quality of a solution $\mathbf{x}$ in terms of the \emph{matrix norm} induced by $\mathbf{L}$. In other words, if $\mathbf{x}$ is the vector of potentials returned by our algorithm and $\mathbf{x}^*$ is an actual solution to $\mathbf{L}\mathbf{x}^* = \mathbf{b}$, then the \emph{error} of our solution $\mathbf{x}$ is
$$\norm{\mathbf{x}^* - \mathbf{x}}_{\Lm}^2 := \left(\mathbf{x}^* - \mathbf{x}\right)\trans \mathbf{L} \left(\mathbf{x}^* - \mathbf{x}\right).$$
Hence, our objective is to find $\mathbf{x} \in \R^V$ that minimizes $\norm{\mathbf{x}^* - \mathbf{x}}_{\Lm}^2$. A precise statement of this is given below.

\begin{center}
\framebox{\textbf{Goal:} Given $\epsilon > 0$, find potentials $\mathbf{x} \in \R^V$ that satisfy $\norm{\mathbf{x}^* - \mathbf{x}}_\mathbf{L}^2 \leq \epsilon\norm{\mathbf{x}^*}_\mathbf{L}^2$.}
\end{center}
Of course, the algorithm does not know the actual solution $\mathbf{x}^*$. The place where $\mathbf{x}^*$ appears is in the analysis.

Equations of the form $\mathbf{L}\mathbf{x} = \mathbf{b}$, where $\mathbf{L}$ is the Laplacian matrix of a graph, are called Laplacian systems, and are found in a wide variety of applications in computer science and other fields. When interpreted as an optimization problem, solving a Laplacian linear system has the following nice interpretation: it is the dual of the problem of finding an electrical flow. The primal problem below is that of finding an electrical flow; here $\Am$ is the vertex-arc incidence matrix of $(V, \vec{E})$. The optimal solution to the primal is called the \emph{electrical flow} in $G$ defined by the resistances $\rv$. The dual problem is equivalent to solving $\Lm\xv = \bv$;  this fact can be seen by setting the gradient of the dual objective to equal 0 since the dual is concave.

\begin{align*}
(P) \quad &\min  \quad \frac12 \sum_{(i,j)\in\vec{E}} r(i,j){f(i,j)}^2  
\qquad\qquad (D)  &\max \quad \bv\trans\xv - \frac12 \xv\trans \Lm \xv \\
&\text{s.t.} \quad \Am\fv = \bv 
&\text{s.t.} \quad \xv \in \R^{V}~\text{unconstrained} 
\end{align*}

Let $\sE(\fv)$ denote the primal objective and $\sB(\xv)$ denote the dual objective. The primal objective is sometimes referred to as the {\em energy} of the electrical flow $\fv$.
Also, let $\fv^*$ denote the optimal primal solution and let $\xv^*$ denote an optimal dual solution. Note that there are infinitely many dual solutions, because the dual objective is invariant under adding a constant to every component of $\xv$.  
By strong duality (note that Slater's condition holds), $\sE(\fv^*) = \sB(\xv^*)$. Moreover,
the KKT conditions give a primal-dual characterization of optimality.
\begin{fact}[KKT Conditions for Electrical Flow]
\label{fact:kkt}
Consider $\fv \in \R^{\vec{E}}$ and $\xv \in \R^V$. Then $\fv$ is optimal for the primal and $\xv$ is optimal for the dual if and only if the following conditions hold:
\begin{enumerate}
    \item $\fv$ is a feasible $\bv$-flow;
    \item (Ohm's Law) \label{ohm} $f(i, j) = \frac{x(i) - x(j)}{r(i,j)}$ for all $(i, j) \in \vec{E}$.
\end{enumerate}
\end{fact}
Thus solving $\mathbf{L}\mathbf{x} = \mathbf{b}$ and finding the electrical $\mathbf{b}$-flow in $G$ are equivalent: Given a solution $\mathbf{x}$ to $\mathbf{L}\mathbf{x} = \mathbf{b}$, we can calculate the electrical flow $\mathbf{f}$ using $f(i,j) = \frac{x(i) - x(j)}{r(i, j)}$. On the other hand, given the electrical flow $\mathbf{f}$, we can recover corresponding potentials $\mathbf{x}$ by setting $x(v) = 0$ for some arbitrary vertex $v$, and using the equation $f(i,j) = \frac{x(i) - x(j)}{r(i,j)}$ to solve for the potentials on every other vertex. 

A $\mathbf{b}$-flow satisfies Ohm's Law if and only if it satisfies the Kirchoff Potential Law (KPL): KPL states that for every directed cycle $C$, $\sum_{(i,j) \in C} f(i,j)r(i,j) = 0$.

Both the Kelner et al.\ algorithm and our algorithm use a low-stretch spanning tree $T$.  Given resistances $\mathbf{r}$, the {\em stretch} of a tree is defined as 
$$
    \st_T(G) = \sum_{(i, j) \in \vec{E}} \st_T(i, j)
    = \sum_{(i, j) \in \vec{E}} \frac{1}{r(i, j)}\sum_{(k, l) \in P(i,j)} r(k, l),
    $$
    where $P(i, j)$ is the unique path from $i$ to $j$ in $T$. 
We can find a spanning tree $T$ with stretch $\st_T(G)=O(m\log n \log\log n)$ in $O(m \log n \log\log n)$ time \cite{AN12}.

We use the notation $\one$ to stand for the vector of all 1s, and $\one_X$ to be the characteristic vector of a set $X$ that has 1s in the entries corresponding to the elements of $X$ and 0s elsewhere.

\section{A Cut-Toggling Algorithm for Solving Laplacian Linear Systems}
\label{sec:alg}
We present a cut-toggling algorithm for computing an approximate solution to $\Lm\xv = \bv$, and also an approximate minimum-energy $\bv$-flow. The goal of this section is to show that the cut-toggling algorithm converges in a near-linear number of iterations, and that each iteration runs in linear time. Later in \Cref{sec:speedup}, we will show how to speed up the algorithm to an almost-linear total running time. Since our algorithm is dual to the cycle-toggling algorithm of Kelner et al. \cite{KOSZ13} (which we call \OALGNAME\ in this paper), we will begin by describing the \OALGNAME\ algorithm.

The  \OALGNAME\ algorithm  works by maintaining a feasible $\bv$-flow $\fv$, and iteratively updates $\fv$  along cycles to satisfy Kirchkoff's Potential Law on the cycle.  It starts by choosing a spanning tree $T$ that has low stretch, and computes a $\mathbf{b}$-flow $\mathbf{f}^0$ that uses only edges in the tree $T$.  Then for a number of iterations $K$ that depends on the stretch of the tree, it chooses a non-tree edge $(i,j) \in E-T$ according to a probability distribution, and for the fundamental cycle closed by adding edge $(i,j)$ to $T$, it modifies the flow $\mathbf{f}$ so that Kirchoff's Potential Law is satisfied on the cycle.  The probability $P_{ij}$ that edge $(i,j)$ gets chosen is proportional to the total resistance around the cycle closed by $(i,j)$ divided by $r(i,j)$. Given the tree $T$  with root $r$ and the current flow $\mathbf{f}^t$ in iteration $t$, there is a standard way to define a set of potentials $\mathbf{x}^t$ (called the {\em tree-induced} or {\em tree-defined} potentials):
set $x(r)$ to 0, and $x(k)$ to the sum of $f(i,j)r(i,j)$ on the path in $T$ from $k$ to $r$.
We summarize \OALGNAME\  in Algorithm \ref{KOSZalg}.

\begin{algorithm}[t]
\begin{algorithmic}[1]
	\State Compute a  tree $T$ with low stretch with respect to resistances $\mathbf{r}$\;
	\State Find flow $\mathbf{f}^0$ in $T$ satisfying supplies $\mathbf{b}$\;
	\State Let $\mathbf{x}^0$ be tree-defined potentials for $\mathbf{f}^{0}$ with respect to tree $T$\;
	\For{$t \gets 1$ to $K$}
        \State Pick an $(i,j) \in E-T$ with probability $P_{ij}$
		\State Update $\mathbf{f}^{t-1}$ to satisfy KPL on the fundamental cycle closed by $(i,j)$\;
		\State Let $\mathbf{f}^t$ be resulting flow\;
		\State Let $\mathbf{x}^t$ be tree-defined potentials for $\mathbf{f}^t$\;
    \EndFor
	\State \Return{$\mathbf{f}^K, \mathbf{x}^K$}
	\caption{The \OALGNAME\  algorithm for solving $\mathbf{L}\mathbf{x} = \mathbf{b}$.}
	\label{KOSZalg}
	\end{algorithmic}
\end{algorithm}

Our algorithm, which we will call \ALGNAME, works by maintaining a set of potentials $\xv$. It iteratively samples cuts in the graph, updating potentials on one side of the cut to satisfy flow conservation across that cut.  Following \OALGNAME, we choose a spanning tree $T$ of low stretch.  Then for a number of iterations $K$ that depends on the stretch of tree $T$, we repeatedly sample a fundamental cut from the spanning tree (i.e.\ a cut induced by removing one of the tree edges). We update all of the potentials on one side of the cut by an amount $\Delta$ so that the amount of flow crossing the cut via Ohm's Law is what is required by the supply vector. We summarize \ALGNAME\ in Algorithm \ref{ouralg}. The main result of this section is a bound on the iteration complexity of \ALGNAME.

\dualkosz*

Next we give the algorithm in somewhat more detail.  Let $R(C) = (\sum_{(k,l) \in \delta(C)} \frac{1}{r(k,l)})^{-1}$ for $C \subset V$, where $\delta(C)$ is the set of edges with exactly one endpoint in $C$. Note that $R(C)$ has units of resistance. For every tree edge $(i,j)$, let  $C(i, j)$ be the set of vertices on one side of the fundamental cut defined by $(i, j)$, such that $i \in C(i, j)$ and $j \not\in C(i, j)$. We set up a probability distribution $P_{ij}$ on edges $(i,j)$ in the spanning tree $T$, where $P_{ij} \propto \frac{r(i,j)}{R(C(i,j))}$.  We initialize potentials $x^0(i)$ to 0 for all nodes $i \in V$. In each iteration, we sample edge $(i,j) \in T$ according to the probabilities $P_{ij}$.    Let $b(C) =\mathbf{b}\trans \one_C$ be the total supply of the nodes in $C$. Note that $b(C)$ is also the amount of flow that should be flowing out of $C$ in any feasible $\mathbf{b}$-flow. 

Let $f^{t}(C)$ be the total amount of flow going out of $C$ in the flow induced by $\mathbf{x}^{t}$. That is,
        $$f^{t}(C) = \sum_{\substack{ij \in E \\ i \in C, \, j \not\in C}} \frac{{x}^t(i) - {x}^t(j)}{r(i,j)}.$$
Note that $f^t(C)$ can be positive or negative. In any feasible $\mathbf{b}$-flow, the amount of flow leaving $C$ should be equal to $\mathbf{b}\trans \one_C = b(C)$. Hence, we define 
$\Delta^t = (b(C) - f^{t}(C))\cdot R(C).$
Observe that $\Delta^t$ is precisely the quantity by which we need to increase the potentials of every node in $C$ so that flow conservation is satisfied on $\delta(C)$. We then update the potentials, so that 
        $$
        p^{t+1}(v) = 
        \begin{cases}
        p^{t}(v) +\Delta^t, &\text{if $v \in C$,} \\
        p^{t}(v), &\text{if $v \not\in C$.}
        \end{cases}
        $$
After $K$ iterations, we return the final potentials $\mathbf{x}^{K}$. The last step is to convert $\xv^K$ to a feasible flow by taking a \emph{tree-defined flow} with respect to $T$: $f^K(i,j) = \frac{x^K(i)-x^K(j)}{r(i,j)}$ on all non-tree edges, and $\fv^K$ routes the unique flow on $T$ to make $\fv^K$ a feasible $\bv$-flow.

\begin{algorithm}[t]
\begin{algorithmic}[1]
	\State Compute a spanning tree $T$ with low stretch with respect to resistances $\rv$\;
	\State Set $x^0(i) = 0$ for all $i \in V$\;
	\For{$t \gets 1$ to $K$}
		\State Pick edge $(i,j) \in T$ with probability $P_{ij} \propto \frac{r(i,j)}{R(C(i,j)}$ and let $C = C(i,j)$\;
		\State $\Delta^t \gets (b(C)-f^t(C))\cdot R(C)$\;
		\State
        $
        x^{t+1}(v) \gets 
        \begin{cases}
        x^{t}(v) +\Delta^t, &\text{if $v \in C$,} \\
        x^{t}(v), &\text{if $v \not\in C$.}
        \end{cases}
        $
	\EndFor
	\State Let $\fv^K$ be the tree-defined flow with respect to $\xv^K$ and $T$\;
	\State \Return{$\mathbf{x}^K$, $\fv^K$}
	\caption{Algorithm \ALGNAME\ for solving $\mathbf{L}\mathbf{x} = \mathbf{b}$.}
	\label{ouralg}
	\end{algorithmic}
\end{algorithm}


\subsection{Analysis of \ALGNAME}
\label{sec:analysis}
Recall that $\sE(\fv) = \frac12\sum_{e \in E}{f(e)}^2$ and $\sB(\xv) = \bv^T\xv - \frac12\xv^T\Lm\xv$. 
By convex duality, we have $\sB(\xv) \leq \sE(\fv)$ for any $\xv\in \R^V$ and $\bv$-flow $\fv$. Moreover, $\xv$ maximizes $\sB(\xv)$ if and only if $\Lm\xv = \bv$. (See e.g.\ \cite[Lemma 8.9]{Williamson19}).
Thus solving the Laplacian system $\mathbf{L}\mathbf{x} = \mathbf{b}$ is equivalent to finding a vector of potentials that maximizes the dual objective. 
In what follows, we present the lemmas that form the bulk of the analysis. Their  proofs are deferred to the appendix.
These lemmas (and their proofs) are similar to their counterparts in Kelner et al.\ \cite{KOSZ13}, because everything that appears here is dual to what appears there.

First, we show that each iteration of the algorithm increases $\sB(\xv)$. 

\begin{restatable}{lemma}{enerincr}
\label{lem:energy_increase}
Let $\mathbf{x} \in \R^V$ be a vector of potentials and let $C \subset V$. Let $\mathbf{x}'$ be the potentials obtained from $\mathbf{x}$ as in the algorithm (that is, by adding $\Delta$ to the potential of every vertex in $C$ so that flow conservation is satisfied across $\delta(C)$). Then 
$$\sB(\mathbf{x}') - \sB(\mathbf{x}) = \frac{\Delta^2}{2R(C)}.$$
\end{restatable}

 The second ingredient in the analysis is to introduce an upper bound on how large the potential bound $\sB(\xv)$ can become. This will allow us to bound the number of iterations the algorithm takes.  

\begin{defn}[Gap]
Let $\mathbf{f}$ be a feasible $\mathbf{b}$-flow and let $\mathbf{x}$ be any vertex potentials. Define 
$$\gap(\mathbf{f}, \mathbf{x}) := \sE(\mathbf{f}) - \sB(\mathbf{x}) = \frac12\sum_{e \in E} r(e)f(e)^2 - \left(\mathbf{b}\trans \mathbf{x} -\frac12 \mathbf{x}\trans \mathbf{L}\mathbf{x}\right).$$
\end{defn}

This same notion of a gap was introduced in the analysis of the Kelner et al.\ algorithm, and was also used to bound the number of iterations of the algorithm.

The electrical flow $\mathbf{f}^*$ minimizes $\sE(\mathbf{f})$ over all $\mathbf{b}$-flows $\mathbf{f}$, and the corresponding vertex potentials $\mathbf{x}^*$ maximize $\sB(\mathbf{x}^*)$ over all vertex potentials $\mathbf{x}$. Moreover, $\sE(\mathbf{f}^*) = \sB(\mathbf{x}^*)$. Therefore, for any feasible flow $\mathbf{f}$, $\gap(\mathbf{f}, \mathbf{x})$ is an upper bound on optimality:
$$\gap(\mathbf{f}, \mathbf{x}) \geq \sB(\mathbf{x}^*) - \sB(\mathbf{x}).$$
The lemma below gives us another way to write $\gap(\mathbf{f}, \mathbf{x})$, and will be useful to us later. This relation is shown in Kelner et al.\  \cite[Lemma 4.4]{KOSZ13}.

\begin{restatable}{lemma}{gapp}
We have
$\gap(\mathbf{f}, \mathbf{x}) = \frac12\sum_{(i, j) \in \vec{E}} r(i, j) \left(f(i, j) - \frac{x(i)-x(j)}{r(i, j)}\right)^2.$
\end{restatable}

The analysis of Kelner et al. \cite{KOSZ13} relies on measuring progress in terms of the above-defined duality gap between primal flow energy and dual potential bound. The high-level idea of the analysis is that one can show that the duality gap decreases by a constant factor each iteration, which implies a linear convergence rate. In the analysis of their algorithm, they maintain a feasible $\mathbf{b}$-flow $\mathbf{f}$ at each iteration, and measure $\gap(\mathbf{f}, \mathbf{x})$ against corresponding tree-defined potentials $\mathbf{x}$.

One difference between their algorithm and ours is that we do not maintain a feasible $\mathbf{b}$-flow at each iteration. However, for $\gap(\mathbf{f}, \mathbf{x})$ to be a valid bound on distance to optimality, we need $\mathbf{f}$ to be a feasible $\mathbf{b}$-flow. To this end, we introduce the definition of ``tree-defined flow'' below.

\begin{defn}[Tree-defined flow]
\label{def:tree_def_flow}
Let $T$ be a spanning tree, $\mathbf{x} \in \R^V$ vertex potentials, and $\mathbf{b} \in \R^V$ satisfying $\one\trans \mathbf{b} = 0$ be a supply vector. The \textbf{tree-defined flow} with respect to $T$, $\mathbf{x}$ and $\mathbf{b}$ is the flow $\mathbf{f}_{T, \xv}$ defined by 
$$\mathbf{f}_{T, \xv}(i, j) = 
\frac{x(i) - x(j)}{r(i, j)} \quad \text{if $(i, j) \not\in T$},
$$
and for $(i, j) \in T$, $f_{T,\xv}(i, j)$ is the unique value such that the resulting $\mathbf{f}_{T, \xv}$ is a feasible $\mathbf{b}$-flow. That is, for $(i, j) \in T$, if $C = C(i, j)$ is the fundamental cut defined by $(i, j)$ and $b(C) = \mathbf{b}\trans \one_C$ is the amount of flow that should be flowing out of $C$ in a feasible $\mathbf{b}$-flow, then  
\begin{align*}
f_{T,\xv}(i, j)  = b(C) - \sum_{\substack{k \in C,\, l \not\in C \\ kl \in E - ij}} f_{T,\xv}(k, l) 
= b(C) - \sum_{\substack{k \in C,\, l \not\in C \\ kl \in E - ij}} \frac{x(k) - x(l)}{r(k, l)}.
\end{align*}
In other words, $\mathbf{f}_{T,\xv}$ is a potential-defined flow outside of the tree $T$, and routes the unique flow on $T$ to make it a feasible $\mathbf{b}$-flow. 
\end{defn}
The below lemma expresses $\gap(\fv_{T, \xv}, \mathbf{x})$ in a nice way. 
\begin{restatable}{lemma}{gaplem}
\label{lem:gap}
Let $T$ be a spanning tree, $\mathbf{x}$ vertex potentials, and $\mathbf{b}$ a supply vector. Let $\fv_{T, \xv}$ be the associated tree-defined flow. Then 
$$\gap(\fv_{T, \xv}, \mathbf{x}) = \frac12 \sum_{(i, j) \in T} r(i, j) \cdot \frac{\Delta(C(i, j))^2}{R(C(i, j))^2}.$$
\end{restatable}

Suppose we have a probability distribution $(P_{ij}: (i, j) \in T)$ on the edges in $T$. If the algorithm samples an edge $(i, j) \in T$ from this distribution, then by \Cref{lem:energy_increase} the expected increase in the dual objective is
$$
\E[\sB(\mathbf{x}')] - \sB(\mathbf{x}) =\frac12 \sum_{(i,j) \in T} P_{ij}\cdot\Delta(C(i, j))^2/R(C(i, j)).
$$
We want to set the $P_{ij}$ to cancel terms appropriately so that the right-hand side is a multiple of the gap. Looking at Lemma \ref{lem:gap}, we see that an appropriate choice is to set
$$P_{ij} := \frac{1}{\tau} \cdot \frac{r(i, j)}{R(C(i, j))},$$
where $\tau := \sum_{(i, j) \in T} \frac{r(i, j)}{R(C(i, j))}$ is the normalizing constant. For this choice of probabilities,
$$\E[\sB(\mathbf{x}')] - \sB(\mathbf{x}) 
= \frac{1}{2\tau} \sum_{(i, j) \in T} r(i,j)\cdot \frac{\Delta(C(i, j))^2}{R(C(i, j))^2} 
= \frac{1}{\tau}\gap(\fv_{T, \xv}, \mathbf{x}),$$
where $\fv_{T, \xv}$ is the tree-defined flow associated with potentials $\mathbf{x}$. As a consequence, we have:

\begin{restatable}{lemma}{gapdecr}
\label{lem:gap_decreases}
If each iteration of the algorithm samples an edge $(i, j) \in T$ according to the probabilities $P_{ij} = \frac{1}{\tau} \cdot \frac{r(i, j)}{R(C(i, j))}$, then we have
$$\sB(\mathbf{x}^*) - \E[\sB(\mathbf{x}^{t+1})]
\leq
\left(1 - \frac1\tau\right)
\left(\sB(\mathbf{x}^*) - \sB(\mathbf{x}^t)\right).$$
\end{restatable}

\begin{restatable}{corollary}{finalgap}
\label{cor:final_gap}
After $K = \tau\ln(\frac{1}{\epsilon})$ iterations, $\sB(\mathbf{x}^*) - \E[\sB(\mathbf{x}^K)] \leq \epsilon \cdot \sB(\mathbf{x}^*)$.
\end{restatable}

We now use the previous lemmas to bound the number of iterations \ALGNAME\ takes. Lemma \ref{lem:gap_decreases} shows that the quantity $\sB(\mathbf{x}^*) - \sB(\mathbf{x}^t)$ decreases multiplicatively by $(1 - \frac1\tau)$ each iteration. Thus, a smaller value of $\tau$ gives faster progress. Moreover, it is not difficult to show that $\tau = \st_T(G, \rv)$ (see \Cref{lem:stretch} in the Appendix), which is why $T$ is chosen to be a low-stretch spanning tree.

We also need to argue that rounding $\xv^K$ to $\fv^K$ via a tree-defined flow preserves approximate optimality. One can show that for any distribution over $\xv$ such that $\E_{\xv}[\sB(\xv)] \geq (1-\frac{\epsilon}{\tau})\sB(\xv^*)$, we have $\E_{\xv}[\sE(\fv_{T, \xv})] \leq (1+\epsilon)\sE(\fv^*)$.
    Combining everything together, we conclude: 
    \dualkosz*
    
     
    
    We end this section with a na\"{i}ve bound on the running time of \ALGNAME. 
    
    \begin{restatable}{lemma}{runtime}
    \label{lem:runtime}
    \ALGNAME\ can be implemented to run in $\wtd{O}(mn\log\frac{1}{\epsilon})$ time. 
    \end{restatable}
    
    \begin{proof}
    We can find a spanning tree $T$ with total stretch $\tau=O(m\log n \log\log n)$ in $O(m \log n \log\log n)$ time \cite{AN12}.
    
    For concreteness, fix an arbitrary vertex to be the root of $T$, and direct all edges in $T$ towards the root. The set of fundamental cuts we consider will be the vertex sets of subtrees of $T$.  
    
    To compute $b(C)$ for these $n-1$ fundamental cuts $C$, we can work our way from the leaves up to the root. If $C = \{v\}$ is a leaf of $T$, then $b(C) = b(v)$. Otherwise, $C$ is a subtree rooted at $v$, and $b(C) = b(v) + \sum_{C'} b(C')$, where the sum is over the subtrees of $C$. Hence we can compute $b(C)$ for all fundamental cuts $C$ in $O(n)$ time.
    
    To compute $R(C)$ for the fundamental cuts $C$, we can maintain $n-1$ variables, one for each fundamental cut. The variable corresponding to cut $C$ will represent $R(C)^{-1} = \sum_{e \in \delta(C)} \frac{1}{r(e)}$, and the variables are initialized to 0. We then iterate through all the edges in the graph, and for each such edge $e$, add $\frac{1}{r(e)}$ to the value of each variable that represents a cut $C$ such that $e \in \delta(C)$. Although this naive implementation takes $O(mn)$ time in the worst-case, it is possible to improve this running time to $O(m\log n)$ using link-cut trees \cite{ST83}. One can also achieve this running time using the same data structure as the one used in \cite{KOSZ13}. 
    
    The last part of the running time is the time it takes to run a single iteration of the algorithm.
    In each iteration of the algorithm, we need to compute $\Delta = (b(C) - f(C))\cdot R(C)$, where $C$ is the fundamental cut selected at that iteration. In the above two paragraphs, we described how to precompute the values of $b(C)$ and $R(C)$ for every fundamental cut $C$; note that these values are fixed at the beginning and do not change during the course of the algorithm. Hence, it remains to compute $f(C)$. One way to compute $f(C)$ is to simply iterate over all the edges in $\delta(C)$, summing each edge's contribution to $f(C)$. This takes time proportional to $\abs{\delta(C)}$, which could be $O(m)$ in the worst case. We can get this down to $O(n)$ per iteration by maintaining the values of $f(C)$ for every fundamental cut $C$, and updating these values each time the algorithm updates potentials. Since there are $n-1$ cuts, to do this in $O(n)$ time requires us to be able to update $f(C)$ for a single cut in $O(1)$ time. To do this, we can precompute an $(n-1) \times (n-1)$ table with a row/column for each fundamental cut, where the $(C_1, C_2)$ entry is the amount by which the flow out of $C_2$ increases if we add 1 to the potential of every node in $C_1$. Let $H(C_1, C_2)$ denote this value. With this table, updating the value of $f(C)$ after a potential update step essentially reduces to a single table lookup, which takes $O(1)$ time.
    
    Finally, note that one can construct the $H(C_1, C_2)$ table in $O(n^2)$ time using results from \cite{Karger00}. In the language of Definitions 5.3 and 5.5 in that paper, we are trying to compute $C(v^\downarrow, w^\downarrow)$ for all vertices $v,w$ in the tree, where the edge weights are the reciprocals of the resistances. At the bottom of page 11, it states that the $n^2$ values $C(v, w^\downarrow)$ can be computed in $O(n^2)$ time. At the top of Page 12, it then says that we get the values of $C(v^\downarrow, w^\downarrow)$ using $n$ treefix sums. (Each treefix sum is the procedure described in Lemma 5.8, and takes $O(n)$ time.))
    
    
    To summarize, we can run each iteration of the algorithm in $O(m)$ time, which can be reduced to $O(n)$ time if we precompute the $H(C_1, C_2)$ table, which incurs an overhead of $O(n^2)$ storage and $O(n^2)$ preprocessing time. 
    
    Suppose each iteration of the algorithm takes $O(I)$ time, and the algorithm uses $O(L)$ preprocessing time (not including the time needed to compute the low-stretch spanning tree). Then the total running time of the algorithm is $O(L + I \tau\ln(\frac{\tau}{\epsilon}) + m \log n \log\log n) = O(L + mI\log n \log\log n\log\frac{\tau}{\epsilon})$. 
    
    If we use the version which uses $O(n^2)$ preprocessing time and $O(n)$ time per iteration, then $L = O(n^2)$ and $I = O(n)$. This gives the running time of \ALGNAME\ to be $O(mn\log n \log\log n\log\frac{\tau}{\epsilon})$. 
    \end{proof}
    
     In Section \ref{sec:ds}, we argue that given a natural abstraction of the data structure problem we use in computing $f(C)$ and updating potentials, it appears unlikely that we can implement each iteration in $o(n^{1-\epsilon})$ time, if each iteration is to be processed one-by-one in an online fashion.  In Section \ref{sec:speedup}, we show how to overcome this data structure lower bound by taking advantage of the fact that the sequence of updates that we perform can be generated in advance.

\section{Lower Bound on the Per-Iteration Complexity of the Algorithm}
\label{sec:ds}
Recall that each single iteration of \OALGNAME\  can be implemented in logarithmic time.
In this section we show that assuming the OMv conjecture (see below) each single iteration of \ALGNAME\  cannot be implemented in linear time. This implies that in order to speed up our algorithm we need to ``batch-up'' iterations, which is the approach we use in the next section. We first present a natural data structure, called the\textit{TreeFlow} data structure, such that each iteration of the algorithm requires only two operations of the  \textit{TreeFlow} data structure and then prove that assuming the OMv conjecture~\cite{HKNS15} it is impossible to implement the \textit{TreeFlow} data structure such that each operation of the data structure takes $O(n^{1-\epsilon})$ time.
To simplify the reduction we reduce from a closely related problem called the \textit{Online Vector-Matrix-Vector Multiplication Problem (OuMv)}. 

\begin{defn}[Online Vector-Matrix-Vector Multiplication Problem]
We are given a positive integer $n$, and a Boolean $n\times n$ matrix $\mathbf{M}$. At each time step $t =1, \ldots, n$, we are shown a pair of Boolean vectors $(\mathbf{u_t}, \mathbf{v_t})$, each of length $n$. Our task is to output $\mathbf{u_t^\top} \mathbf{M} \mathbf{v_t}$ using Boolean matrix-vector operations. Specifically, ``addition" is replaced by the \textsf{OR} operation, so that $0+0 = 0$, and $0+1 = 1+0 = 1+1 = 1$. Hence, $\mathbf{u_t^\top} \mathbf{M} \mathbf{v_t}$ is always either 0 or 1. 
\end{defn}

The OMv conjecture implies that no algorithm for the OuMv problem can do substantially better than naively multiplying $\mathbf{u_t^\top} \mathbf{M} \mathbf{v_t}$ at time step $t$. Specifically, it says the following:

\begin{lemma}[\cite{HKNS15}]
\label{conj:omv}
Let $\epsilon > 0$ be any constant.
Assuming the OMv conjecture,
there is no algorithm for the online vector-matrix-vector multiplication problem that uses preprocessing time $O(n^{3-\epsilon})$ and takes total time $O(n^{3-\epsilon})$  with error probability at most $1/3$ in the word-RAM model with $O(\log n)$ bit words. 

\end{lemma}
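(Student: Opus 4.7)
The statement is attributed to Henzinger, Krinninger, Nanongkai, and Saranurak, so the plan is to sketch their reduction from OMv to OvMv that preserves the $n^{3-\epsilon}$ hardness threshold. The overall strategy is to prove the contrapositive: assume an OvMv algorithm with preprocessing time $O(n^{3-\epsilon})$ and total query time $O(n^{3-\epsilon})$, and build an OMv algorithm running in time $O(n^{3-\epsilon'})$ for some $\epsilon' > 0$, contradicting the OMv conjecture.

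First, I would decompose OMv into OvMv queries via bit extraction: each coordinate $(\mathbf{M}\mathbf{v_t})_i$ equals $\mathbf{e_i}^\top \mathbf{M}\mathbf{v_t}$, which is a single OvMv query on the same matrix $\mathbf{M}$. So the $n$ vector queries of OMv collectively require $n^2$ scalar OvMv queries. The challenge is that OvMv, as stated, permits only $n$ queries per instance of dimension $n$, so the $n^2$ scalar queries cannot be fed directly to a dimension-$n$ OvMv oracle.

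Second, I would lift to a larger OvMv instance of dimension $N$, whose matrix $\widetilde{\mathbf{M}}$ is a structured padding of copies of $\mathbf{M}$ so that each of the $N$ permitted OvMv queries against $\widetilde{\mathbf{M}}$ extracts one desired OMv bit. The natural first attempt is a block-diagonal $\widetilde{\mathbf{M}} = \mathrm{diag}(\mathbf{M}, \ldots, \mathbf{M})$ with $n$ copies, giving $N = n^2$; each $\widetilde{u}$ selects both a block and a row, each $\widetilde{v}$ embeds the appropriate $\mathbf{v_t}$, and all $n^2$ bit queries are answered by a single OvMv run at cost $O(N^{3-\epsilon}) = O(n^{6-2\epsilon})$. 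Since this is not subcubic in $n$, a tighter, scaling-aware padding is required, which is the crux of the HKNS15 argument: split the OMv query stream into batches of a carefully chosen size, reuse preprocessing across batches, and balance the padded dimension against the per-batch query count so that the exponent $(3-\epsilon)$ on $N$ translates to an exponent $(3-\epsilon')$ on $n$.

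Third, I would handle the randomized error by independent repetition and majority vote. Each OvMv answer is correct with probability $\geq 2/3$; running $O(\log n)$ independent copies of the OvMv algorithm and taking a per-bit majority drives the per-bit error to $1/n^{O(1)}$, and a union bound over the $n^2$ extracted bits gives overall success probability $\geq 2/3$. The polylogarithmic overhead does not affect the exponent. The main obstacle is precisely the second step: choosing the padding and the batching so that the $(3-\epsilon)$ OvMv exponent is genuinely inherited by the OMv reduction rather than being washed out by the blow-up in dimension. Once the scaling is pinned down, combining bit extraction, the padded OvMv run, and error amplification yields an OMv algorithm running in time $O(n^{3-\epsilon'}\,\mathrm{polylog}\,n)$ for some $\epsilon' > 0$, contradicting the OMv conjecture and completing the proof.
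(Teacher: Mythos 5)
The paper does not prove this lemma; it is cited verbatim from \cite{HKNS15}, so the relevant comparison is with the reduction in that paper. Your high-level framing (prove the contrapositive, reduce OMv to OvMv, amplify error by repetition and union bound) is correct and matches the standard setup. The error-amplification paragraph is routine and fine.

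The problem is the central step, and I believe it is not merely ``left as the crux'' but pointed in a direction that cannot work. Bit extraction, $\bigl(\mathbf{M}\mathbf{v}_t\bigr)_i=\mathbf{e}_i^\top\mathbf{M}\mathbf{v}_t$, requires $n^2$ OvMv queries to recover all of the OMv output. But the OvMv hypothesis only gives you total time $O(n^{3-\epsilon})$ over $n$ queries on an $n\times n$ matrix, i.e.\ $O(n^{2-\epsilon})$ amortized per query. Even if the preprocessing could be reused perfectly and the amortization extended to an arbitrary number of queries, $n^2$ queries would cost $\Theta(n^2\cdot n^{2-\epsilon})=\Theta(n^{4-\epsilon})$, which exceeds $n^3$. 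Your ``block-diagonal $\widetilde{\mathbf{M}}$'' variant has $N=n^2$ and costs $O(N^{3-\epsilon})=O(n^{6-2\epsilon})$, as you note; but the more general ``batch the $n^2$ bit-queries into batches of size $N$ against a padded $N\times N$ instance'' idea also fails: with $n^2/N$ batches each costing $O(N^{3-\epsilon})$, the total is $O(n^2N^{2-\epsilon})$, which is subcubic only when $N<n^{1/(2-\epsilon)}<n$, impossible since the matrix you need to embed is already $n\times n$. So there is no choice of batch size or padding dimension that makes the bit-extraction decomposition subcubic; the factor-$n$ overhead is intrinsic to asking for the OMv output one bit at a time.

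The actual argument in \cite{HKNS15} avoids bit-by-bit extraction from a square OuMv instance. They route the reduction through the rectangular $\gamma$-uMv/$\gamma$-OuMv problems with unbalanced dimensions $n_1\ll n_2$ (and a separate query-count parameter $n_3$), and the hardness of the balanced $n_1=n_2=n_3=n$ case is obtained from the chain $\text{OMv}\Rightarrow\gamma\text{-OMv}\Rightarrow\gamma\text{-uMv}\Rightarrow\gamma\text{-OuMv}$, with the rectangular parameters chosen so that the loss from enumerating output coordinates is absorbed by the short side of the matrix rather than multiplying the whole instance size. The single most important missing idea in your write-up is precisely this move to unbalanced rectangular instances; without it, no amount of padding or batching of square instances will close the $n^{4-\epsilon}$ vs.\ $n^{3-\epsilon}$ gap.
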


Thus we will reduce the OuMv problem to the \textit{TreeFlow} data structure such that
computing $\mathbf{u_t^\top} \mathbf{M} \mathbf{v_t}$ requires two operations in the \textit{TreeFlow} data structure. The lower bound then follows from Lemma~\ref{conj:omv}.

\subsection{The \textit{TreeFlow}  Data Structure}
\label{subsec:ds}
The \textit{TreeFlow} data structure is given as input (1) an undirected graph $G = (V, E)$ with $n = \abs{V}$, (2) a spanning tree $T$ of $G$ that is rooted at a fixed vertex $x$, (3) a value $r(u,v)$ for each edge $(u,v) \in E$ (representing the resistance of $(u, v)$), and (4) a value $b(v)$ for each vertex $v \in V$ (representing the supply at $v$). The quantities $r(u, v)$ and $b(v)$ are given at the beginning and will remain unchanged throughout the operations. For any set $C \subset V$, let $b(C) := \sum_{v \in C} b(v)$.

Furthermore,
each vertex $v$ has a non-negative {value}, denoted $\mathsf{value}(v)$, which  can be seen as the ``potential'' of $v$. It is initially 0 and can be modified.
For any set $C \subset V$  we define the \emph{flow out of} $C$ to be the quantity
$$f(C) := \sum_{(u,v) \in E, u \in C, v \not\in C} \left(\mathsf{value}(u) - \mathsf{value}(v)\right)/r(u,v).$$

The \textit{TreeFlow} data structure supports the following operations.
\begin{itemize}
    \item $\textsf{addvalue(\textbf{vertex} $v$, \textbf{real} $x$)}$: Add $x$ to the value of every vertex in the subtree of $T$ rooted at $v$. 
    \item $\textsf{findflow(\textbf{vertex} $v$)}$: Return $b(C) - f(C)$, where $C$ is the set of vertices in the subtree of $T$ rooted at $v$. 
\end{itemize}
The \emph{TreeFlow} data structure implements exactly the operations we require for each iteration of \ALGNAME: The \textsf{addvalue} operation allows us to update the potentials on a fundamental cut, and \textsf{findflow} computes $b(C) - f(C)$, thereby allowing us to compute $\Delta$ at each iteration.
Note that if all $b(v)$-values are zero, the \emph{TreeFlow} data structure simply returns $-f(C)$, which gives it its name.

We even show the lower bound  for a ``relaxed'' version defined as follows: In an \emph{$\alpha$-approximate \textit{TreeFlow} data structure} the
operation \textsf{addvalue} remains as above and the operation  \textsf{findflow($v$)} returns a value that is within a multiplicative
 factor $\alpha \ge 1$ (that can be a function of $n$) of the correct answer, i.e., a value between $(b(C) - f(C))/\alpha$ and $(b(C) - f(C))\cdot \alpha$. The hardness of approximation is interesting, because it turns out that even an approximation of this quantity is sufficient to obtain an algorithm for $\mathbf{L}\mathbf{x} = \mathbf{b}$. (Albeit with a convergence rate that deteriorates with the approximation factor.)

\begin{restatable}{lemma}{reduction}
\label{lem:reduction}
Let $\epsilon > 0$ be any constant and let $\alpha \ge 1$ be any value.
Assuming the OMv conjecture, no implementation of the $\alpha$-approximate \textit{TreeFlow} data structure exists that uses preprocessing time $O(n^{3-\epsilon})$ and where the two operations \textsf{addvalue} and \textsf{findflow} both take $O(n^{1-\epsilon})$ time,
 such that over a polynomial number of operations the error probability is at most $1/3$ in the word-RAM model with $O(\log n)$ bit words. This even holds if all $r(u,v)$ values are 1 and if all $b(v)$ are 0.
\end{restatable}

\section{Speeding Up \ALGNAME}
\label{sec:speedup}
We now show how to surmount the OMv lower bound by taking advantage of the fact that the sequence of updates that \ALGNAME\  performs can be generated in advance.
In \Cref{sec:batch}, we show that batching the updates yields a modification of the algorithm that runs in $\wtd{O}(m^{1.5})$ time. Then in \Cref{sec:recurse}, we use sparsification and recursion to further improve the runtime to $\wtd{O}(m^{1+\alpha})$ for any $\alpha > 0$. 

\subsection{A Faster Algorithm using Batching}
\label{sec:batch}
First, we show that it is possible to speed up the running time to $\wtd{O}(m^{1.5})$ time by batching the updates performed by \ALGNAME. In Lemma \ref{lem:runtime}, we showed that the algorithm can be implemented to run in time $\widetilde{O}(mn)$. (Here the tilde hides a factor of  $\log n \log\log n \log\frac{1}{\epsilon}$.) 
This running time essentially comes from $\wtd{O}(m)$ iterations, and $O(n)$ time per iteration.
Recall that each iteration of \ALGNAME\ involves sampling a fundamental cut $C$ of the low-stretch spanning tree $T$ from a fixed probability distribution $P$, and then adding a constant to the potential of every vertex in $C$ so that the resulting potential-defined flow satisfies flow conservation across $C$. 

The main idea of batching is as follows. Denote the number of iterations by $K$ (which is $\widetilde{O}(m)$). 
Instead of sampling the fundamental cuts one at a time, consider sampling the next $l$ cuts that need to be updated for some $l \ll K$.  We can perform this sampling in advance because both the tree $T$ and the probability distribution over cuts of $T$ are fixed over the entire course of the algorithm. In each ``block" of size $l \ll K$, we contract all the edges of $T$ that do not correspond to one of the $l$ fundamental cuts to be updated. In this way, we work with a contracted tree of size $O(l)$ in each block (instead of the full tree, which has size $O(n)$). This makes the updates faster. However, the price we pay is that at the end of each block, we need to propagate the updates we made (which were on the contracted tree), back to the entire tree. We will show that by choosing $l = \sqrt{m}$, we can balance this tradeoff and get an improved running time of $\wtd{O}(m^{1.5})$. Pseudocode for \ALGNAME\ with batching is given in Algorithm \ref{alg:faster}. Note that the correctness of this algorithm follows directly from the correctness of \ALGNAME: Algorithm \ref{alg:faster} samples cuts from exactly the same distribution as \ALGNAME, and if we fix the same sequence of cuts to be used by both algorithms, then the output of the two algorithms is identical.

\begin{algorithm}[ht!]
\begin{algorithmic}[1]
\caption{\ALGNAME\ with batching}
\label{alg:faster}
	\State Compute a  tree $T$ with low stretch with respect to resistances $\rv$.
	\State Compute $b(C)$ and $R(C)$ for all fundamental cuts $C$ of $T$.
	\State Set $\xv^0(i) = 0$ for all $i \in V$. Set $f(C) = 0$ for all fundamental cuts $C$ of $T$.
	\For{$t \gets 1$ to $\ceil{\frac{K}{l}}$}
	    \State Sample $l$ edges $(i_1,j_1), \ldots, (i_l, j_l)$ with replacement from $T$, according to the distribution $P$.
	    \State Contract all edges in $T$ that were not sampled in step 5. \\ Let $\wtd{G}$ be the resulting graph and $\wtd{T}$ be the resulting tree.
	    \State For each $1 \leq k\leq l$, let ${C}_k$ denote the fundamental cut in $T$ determined by edge $(i_k, j_k)$. Let $\wtd{C}_k$ denote the fundamental cut in $\wtd{T}$ determined by $(i_k, j_k)$.
	    \State $\yv(\tilde{v}) \gets 0$ for all $\tilde{v} \in V(\wtd{G})$.
	    \For{$k \gets 1$ to $l$}
	        \State Compute $\Delta_k = (b(C_k) - f(C_k))\cdot R(C_k)$. \Comment{Requires $f(C_k)$ to be \\ already computed}
	        \State $\yv(\tilde{v})  \gets \yv(\tilde{v}) + \Delta_k$ for all $\tilde{v} \in \wtd{C}_k$.
	        \State Update values of $f(C_j)$ for all $j\in\{k+1,\ldots, l\}$.
	    \EndFor
	    \For{$i \in V$}
	        \State Let $\tilde{v}(i)$ be the vertex in $\wtd{G}$ that $i$ was contracted to.
	         \State $\xv^t(i) \gets \xv^{t-1}(i) + \yv(\tilde{v}(i))$.
	    \EndFor
        \State Recompute $f(C)$ for all fundamental cuts $C$ of $T$.
	\EndFor
	\State Let $\fv^{\ceil{K/l}}$ be the tree-defined flow with respect to $\mathbf{x}^{\ceil{K/l}}$ and $T$.
	\State \Return{$\mathbf{x}^{\ceil{K/l}}$, $\fv^{\ceil{K/l}}$}
\end{algorithmic}
\end{algorithm}

\begin{theorem}
    The running time of \ALGNAME\ with batching is $O(m^{1.5}\log n \log\log n \log\frac{1}{\epsilon})$. This is achieved by choosing $l = \sqrt{m}$. 
\end{theorem}
\begin{proof}



At the beginning of the algorithm, we compute the values of $b(C)$ ($O(n)$ time via a dynamic program) and $R(C)$ ($O(m \log n)$ time via link-cut trees), just as in the proof of \Cref{lem:runtime}.

Consider a batch of $l$ updates. Note that the contracted tree $\wtd{T}$ has at most $l+1$ vertices. After contracting, we need to perform $l$ updates. This involves, for each $k \in \{1,2,\ldots, l\}$:
\begin{itemize}
    \item Computing $\Delta_k := (b(C_k) - f(C_k))\cdot R(C_k)$,
    \begin{itemize}
        \item This takes $O(1)$ time assuming $f(C_k)$ has already been computed. (Recall that the values $b(C_k)$ and $R(C_k)$ are computed at the very beginning of the algorithm.)
    \end{itemize}
    \item Adding $\Delta_k$ to $\yv(\tilde{v})$ for every $\tilde{v} \in \wtd{C}_k$.
    \begin{itemize}
        \item This takes $O(l)$ time, because the contracted tree has size $O(l)$. 
    \end{itemize}
    \item Updating the values $f(C_{k+1}), f(C_{k+2}), \ldots, f(C_{l})$ so they can be used in the later iterations of the inner loop. 
    \begin{itemize}
        \item If each $f(C_j)$ can be updated in $O(1)$ time, this takes $O(l)$ time. 
        \item To update each $f(C_j)$ in $O(1)$ time, we can precompute at the beginning of the block the $H(C_i, C_j)$ table for $i,j \in \{1,2,\ldots, l\}$, just as in the proof of \Cref{lem:runtime}. The difference now is that we only need to compute the table for the cuts that will be updated in the block. There are $l$ such cuts, so the total time to compute the table is $O(l^2)$, again using Karger's method.

    \end{itemize}
\end{itemize}

At the end of each block, we propagate the updates we made on the contracted graph back to the original graph. This involves
\begin{itemize}
    \item Determining the new potential of each node in $G$.
    \begin{itemize}
        \item This takes $O(n)$ time by iterating over the nodes of $G$.
    \end{itemize}
    \item Determining the value of $f(C)$ for each fundamental cut determined by $T$. 
    (By convention, assume the edges of $T$ are directed toward the root, and that the fundamental cuts we consider are the vertex sets of the subtrees of $T$.)
    \begin{itemize}
        \item This can be done in $O(m)$ time using a dynamic program that works from the leaves to the root. First, we compute $f(C)$ at each leaf of the tree. Next, suppose we have are at a non-leaf node $v$, and let $C$ be the set of vertices in the subtree rooted at $v$. Suppose we have already computed $f(D_1), f(D_2), \ldots, f(D_k)$, where $D_1, \ldots, D_k$ are the proper subtrees of $v$. Then we can compute $f(C)$ as follows:
        $$f(C) = \sum_{i=1}^k f(D_i) + \sum_{w: vw \in E} \frac{p(v) - p(w)}{r(v, w)}.$$
        This sum correctly counts the flow leaving $C$. This is because any edge leaving $C$ is counted once. On the other hand, if an edge is between $D_i$ and $D_j$, then it is counted once in the $f(D_i)$ term, and once with the \emph{opposite sign} in the $f(D_j)$ term, so it zeros out. Similarly, if an edge is between $D_i$ and $v$, it also zeros out. 
        
        The running time of this dynamic program is $O(m)$, because the time taken at each node is proportional to its degree, and the sum of all the node degrees is equal to $2m$. 
    \end{itemize}
\end{itemize}

To summarize, there are $K$ iterations, divided into blocks of size $l$. In each block, we pay the following.
\begin{itemize}
    \item \textbf{Start of block:} $O(m)$ time to contract the tree, and $O(l^2)$ time to compute the $H(C, C')$ table for the cuts that will be updated in the block.
    \item \textbf{During the block:} $O(l)$ time per iteration. Since each block consists of $l$ iterations, this is $O(l^2)$ in total.
    \item \textbf{End of block:} $O(m)$ time to propagate the changes from the contracted tree to the original tree.
\end{itemize}
Hence, each block takes $O(m + l^2)$ time. Multiplying by the number of blocks, which is $K/l$, this gives a running time of $O(K(\frac{m}{l} + l))$. Choosing $l = \sqrt{m}$ to minimize this quantity, we get $O(K\sqrt{m})$. 

The final running time is therefore $O(K\sqrt{m})$ plus the preprocessing time. 
Preprocessing consists of finding a a low-stretch spanning tree ($O(m\log n \log\log n$)), plus computing the values of $R(C)$ ($O(m\log n)$), and $f(C)$ ($O(n)$).

Thus, the preprocessing time is dominated by the time it takes to run the iterations. So, the total running time is now: $O(K\sqrt{m}) = O(m^{1.5}\log n \log\log n \log \frac{1}{\epsilon})$.

\end{proof}

\subsection{A Still Faster Algorithm via Batching, Sparsification, and Recursion}
\label{sec:recurse}
We now show that we can further speed up the algorithm using sparsification and recursion. The goal is to show that we can we can obtain a running time of the form $O(A^{\frac{1}{\delta}}m^{1+\delta}(\log n)^{\frac{B}{\delta}}(\log \frac{1}{\epsilon})^{\frac{1}{\delta}})$ for any $\delta > 0$, where $A$ and $B$ are numerical constants. 

Consider batching the iterations of the algorithm as follows. Pick a positive integer $d$, and repeat $K$ times:
\begin{itemize}
    \item Sample the next $d$ updates to be performed by the algorithm. These correspond to $d$ edges of the spanning tree $T$.
    \item Let $V_0, V_1, \ldots, V_d$ be the vertex sets that $T$ is partitioned into by the $d$ tree edges.
    \item Add $\Delta(i)$ to every vertex in $V_i$. We will choose the values $\Delta(0), \Delta(1), \ldots, \Delta(d)$ to greedily maximize the increase in the dual bound. 
\end{itemize}
Note that our original algorithm corresponds to the case when $d = 1$.  The lemma below quantifies the increase of the dual objective after one step of the above update.
\begin{lemma}
\label{lem:dual_incr}
Let $(V_0, \ldots, V_d)$ be a partition of $V$. 
Let $\xv \in \R^V$ be a vector of potentials, and let $\Delta = (\Delta(0), \ldots, \Delta(d))$ be any vector in $\R^{d+1}$. Let $\tilde{\xv}$ be obtained from $\xv$ by adding $\Delta(i)$ to the potential of every node in $V_i$. Then, the increase in the dual bound is given by the formula
$$\mathcal{B}(\wtd{\xv}) - \mathcal{B}(\xv) = {\bv}_H^T\Delta - \frac12\Delta^T{\Lm}_H\Delta,$$
where
\begin{itemize}
    \item $H$ is the contracted graph with vertices $V_0, V_1, \ldots, V_d$ and resistances $r(V_k, V_l) = \left(\sum_{ij \in \delta(V_k, V_l)} \frac{1}{r(i,j)}\right)^{-1}$,
    \item ${\Lm}_H$ is the Laplacian matrix of ${H}$, and
    \item ${b}_H(k) = b(V_k) - f(V_k)$ for $k = 0,1,\ldots, d$.
\end{itemize}
In particular, the choice of $\Delta$ that maximizes $\sB(\wtd{\xv}) - \sB(\xv)$ is given by the solution to ${\Lm}_H\Delta = \bv_H$.
\end{lemma}

\begin{proof}
We write the increase in the dual potential bound. Recall that $f(V_k)$ is the amount of flow leaving $V_k$ in the flow $f(i,j) = \frac{x(i)-x(j)}{r(i,j)}$. We let $f(V_k, V_l)$ be the amount of flow going from $V_k$ to $V_l$.
\begin{align*}
    &2\left(\sB(\wtd{\xv}) - \sB(\xv)\right) \\
    =\; &(2\bv^T\wtd{\xv} - \wtd{\xv}^T\Lm\wtd{\xv}) - (2\bv^T\xv- \xv^T\Lm\xv) \\
    =\; &2\sum_k b(V_k)\Delta(k) + \sum_{(i,j) \in \vec{E}} \frac{1}{r(i,j)}\left[(x(i)-x(j))^2 - (\tilde{x}(i) - \tilde{x}(j))^2\right] \\
    =\; & 2\sum_k b(V_k)\Delta(k) + \sum_{(i,j) \in \vec{E}} \frac{1}{r(i,j)}\left[(x(i)-x(j) + \tilde{x}(i) - \tilde{x}(j))(x(i)-x(j) - \tilde{x}(i) + \tilde{x}(j))\right] \\
    =\; & 2\sum_k b(V_k)\Delta(k) + \sum_{k<l}\sum_{(i,j)\in \delta(V_k,V_l)} \frac{1}{r(i,j)}\left[(2x(i)-2x(j) +\Delta(k)  -\Delta(l))(\Delta(l) - \Delta(k))\right] \\
    =\; & 2\sum_k b(V_k)\Delta(k) + 2\sum_{k<l}(\Delta(l) - \Delta(k))\sum_{(i,j)\in \delta(V_k,V_l)} \frac{1}{r(i,j)}(x(i)-x(j)) \\ & \qquad\qquad -\sum_{k<l}(\Delta(k) - \Delta(l))^2\sum_{(i,j) \in \delta(V_k, V_l)} \frac{1}{r(i,j)}  \\
    =\; & 2\sum_k b(V_k)\Delta(k) + 2\sum_{k<l}(\Delta(l) - \Delta(k))f(V_k, V_l) -\Delta^T\Lm_{H}\Delta  \\
    =\; & 2\sum_k b(V_k)\Delta(k) - 2\sum_{k}\Delta(k) f(V_k) -\Delta^T\Lm_{H}\Delta  \\
    =\; & 2\sum_k (b(V_k)-f(V_k))\Delta(k) -\Delta^T\Lm_{H}\Delta \\
    =\; & 2{\bv}_H^T\Delta -\Delta^T\Lm_{H}\Delta
\end{align*}
Note that this is a concave function of $\Delta$, because $\Lm_{H}$ is positive semidefinite. Therefore, maximizing this expression is equivalent to setting its gradient to 0. Taking its gradient and setting to 0 yields $\Lm_{H}\Delta = \bv_H$, as claimed. 

\end{proof}

\begin{remark}
Another interpretation of the $\Delta$ that maximizes $\sB(\wtd{\xv}) - \sB(\xv)$ in the Lemma above is as follows: $(\Delta(0), \ldots, \Delta(d))$ are the values such that if one adds $\Delta(i)$ to the potential of every vertex in $V_i$, the resulting potential-induced flow satisfies the flow constraints $f(V_k) = b(V_k)$ for all $k = 0, \ldots, d$. 
\end{remark}



\subsection{The Sparsify and Recurse Algorithm}
Next we give the algorithm with sparsification and recursion in more detail. Observe that $d$ cut-toggling updates effectively break the spanning tree into $d+1$ components.  After contracting the components to get a graph $H$ with $d+1$ vertices, \Cref{lem:dual_incr} shows that solving the Laplacian system $\Lm_H \Delta = \bv_H$ gives the update that maximizes the increase in $\sB(\wtd{\xv}) - \sB(\xv)$  among all updates that increment the potential of all vertices in $V_i$ by the same amount.  In particular, the progress made by this update step is is at least as large as the progress made by the sequence of $d$ updates performed by the straightforward unbatched algorithm. 

A natural approach is to solve $\Lm_H \Delta = \bv$ recursively.
However, this by itself does not give an improved running time.
Instead, we will first spectrally sparsify $H$ to get a sparsified approximation $\wtd{\Lm}_H$ of $\Lm_H$, satisfying $(1-\gamma)\Lm_H \preceq \wtd{\Lm}_H \preceq (1+\gamma) \Lm_H$ for an appropriate constant $\gamma \in (0,1)$.
Such a matrix $\wtd{\Lm}_H$ is known as a $\gamma$-spectral sparsifier of $\Lm_H$.
We then call the algorithm recursively {on $H$} to solve $\wtd{\Lm}_H \wtd{\Delta} = \bv_H$.
 This approach 
 is  akin to
recursive sparsification based Laplacian solvers~\cite{KMP12,JS21,CKMPPRX14},
but whose progress is bounded via the effect of single cut toggles. 
A main task of the analysis is to bound the error incurred by solving the sparsified system instead of the exact one.

For the spectral sparsification, we need to use an algorithm that does not require calling Laplacian solvers as a subroutine. For examples of such algorithms, see~\cite{kyng17sparsify,Koutis16sparsify,ST11:journal,kapralov12sparsify,PS13}. We will use the sparsifier given in Theorem 1.2 of~\cite{kyng17sparsify}, which returns a $\gamma$-spectral sparsifier with $O(n\log^2 n \log\log n / \gamma^2)$ edges in $O(m \log^2 n \log\log n / \gamma^2)$ time, with probability at least $1 - \frac{1}{\poly(n)}$. Here, $m,n$ are the number of edges and vertices, respectively, in the graph before sparsifying, and the $1- \frac{1}{\poly(n)}$ probability means that one can make the probability $1 - \frac{1}{n^k}$ for any constant $k$, with the $k$ appearing as a multiplicative factor in the big-$O$ expressions. We will take $k=2$ when calling the sparsifier in our algorithm, so it succeeds with probability at least $1 -\frac{1}{n^2}$.

\begin{algorithm}[!ht]
\begin{algorithmic}[1]
\State If $\abs{V} \leq n_0$, solve $\Lm_G\xv = \bv$ using Gaussian elimination and \textbf{return} $\xv$.
\\\Comment{$\Lm_G$ is the Laplacian matrix of $G$.}
\State Compute a low-stretch spanning tree $T$ of $G$. 
\State Initialize $\xv^0 = 0$.
\For{$t=0$ to $K$}
\State Generate the next $d$ updates. These correspond to $d$ tree edges $e_1^t, \ldots, e_d^t \in T$.
\State Let $V^t_0, \ldots, V^t_d$ be the vertex sets of the connected components of \\ $T - \{e_1^t, \ldots, e_d^t\}$. 
\State Contract $G$ to $H^t$ with $d+1$ vertices: Each $V_i^t$ is a vertex in $H^t$, and \\ resistances in $H^t$ are 
$$r_{H^t}(V_k^t, V_l^t) = \left(\sum_{ij \in \delta(V_k, V_l)} \frac{1}{r(i,j)}\right)^{-1}.$$
\State Compute $\wtd{\Lm}_{H^t}$, a $\gamma$-spectral sparsifier of $\Lm_{H^t}$, where $\gamma \in (0,1)$ will be \\ a parameter that we will determine later.
\State  Let $\bv^t_G = \bv - \Lm_G\xv^t$.
\State Let $\bv^t_{H^t} \in \R^{V(H^t)}$ be defined as follows:
    $b^t_{H^t}(V_i^t) = \sum_{u \in V_i^t} b^t_G(u) \quad \text{for all $i=0,1,\ldots,d$}.$
\State Call the algorithm recursively to solve the Laplacian system $\wtd{\Lm}_{H^t}\wtd{\Delta}^t = \bv^t_{H^t}$ for $\wtd{\Delta}^t$. This will return an approximate solution $\wtd\Delta^t_{\epsilon'}$ that satisfies $\norm{\wtd\Delta^t_{\epsilon'} - \wtd\Delta^t}_{\wtd{\Lm}_{H^t}}^2 \leq \epsilon' \norm{\wtd{\Delta}^t}^2_{\wtd{\Lm}_{H^t}}$. Here, $\epsilon'$ is the error parameter that we will input to the recursive call, to be determined later.
\State Update $\xv^t$ using $\tilde{\Delta}^t_{\epsilon'}$ to get the next iterate $\xv^{t+1}$. For every vertex $u \in V$, 
$$x^{t+1}(u) \gets x^t(u) + \tilde{\Delta}^t_{\epsilon'}(V^t_i),$$ where $V^t_i$ is the set in $V_0^t, \ldots, V_d^t$ such that $u \in V^t_i$. In other words, we update $\xv^t$ to $\xv^{t+1}$ by 
adding $\tilde{\Delta}^t_{\epsilon'}(V^t_i)$ to the potential of every vertex in $V^t_i$. 
\State \textbf{If} $\sB(\xv^{t+1}) \leq \sB(\xv^t)$, revert $\xv^{t+1} \gets \xv^t$. 
\EndFor
\State Return $\xv^K$ and the corresponding tree-defined flow $\fv^K$.
\caption{\ALGNAME\ with batching, sparsification, and recursion}
\label{alg:fastest}
\end{algorithmic}
\end{algorithm}
Pseudocode for \ALGNAME\ with batching, sparsification, and recursion is given in Algorithm \ref{alg:fastest}. The base case of the recursive algorithm is when $\abs{V} \leq n_0$, where $n_0$ is a constant that the algorithm can choose. For the base case, we simply use Gaussian elimination to solve $\Lm_G\xv =\bv$, which takes $O(1)$ time since $n_0$ is a constant. For every $t$, contracting $G$ down to $H^t$ and computing the new resistances takes $O(m)$ time; this is because each edge in $G$ contributes to the resistance between exactly one pair of nodes in $H^t$.

\subsection{Analysis of the Sparsify and Recurse Algorithm}
We now analyze Algorithm \ref{alg:fastest}. We first bound the convergence rate, then analyze the running time. 

\subsubsection{Error Analysis}
The lemma below bounds the expected rate of convergence of $\xv^t$ to $\xv^*$.
\begin{lemma}
\label{lem:err}
For all $t \geq 0$, we have $\E\norm{\xv^* - \xv^t}_{\Lm_G}^2 \leq \left(1 - \beta + \beta e^{-\frac{d}{\tau}}\right)^t \norm{\xv^*}_{\Lm_G}^2$. Here,
\begin{itemize}
    \item $\tau = O(m\log n \log\log n)$ is the stretch of the spanning tree,
    \item $d$ is the number of updates in each batch,
    \item $\beta = \left(1-\frac{1}{n_0^2}\right)\left(1-\left(4\epsilon'\cdot\frac{1+\gamma}{1-\gamma}\cdot\left(1+\frac{\gamma^2}{(1-\gamma)^2}\right) + \frac{2\gamma^2}{(1-\gamma)^2}\right)\right).$
\end{itemize}
In particular, if we choose $n_0 = 10$, $\gamma = \frac{1}{100}$, and $\epsilon' = \frac{1}{100}$, then $\beta \geq \frac{4}{5}$, so that
$$\E\norm{\xv^* - \xv^t}_{\Lm_G}^2 \leq \left(\frac{1}{5} + \frac{4}{5} e^{-\frac{d}{\tau}}\right)^t \norm{\xv^*}_{\Lm_G}^2.$$
\end{lemma}
\begin{proof}
Define the random variable $D^t := \sB(\xv^*) - \sB(\xv^t)$.
We will show in Lemma \ref{lem:gap_reduce} that for every possible realization $\xv^t$, we have 
$\E\left[D^{t+1} \mid \mathbf{x}^t\right] \leq \left(1 -\beta+\beta e^{-d/\tau}\right)\E\left[D^t \mid \mathbf{x}^t\right].$
This implies that $\E[D^{t+1}] \leq \left(1 -\beta+\beta e^{-d/\tau}\right)\E\left[D^t\right]$ unconditionally. 

It then follows that
$$\E\left[D^t\right] \leq \left(1 -\beta+\beta e^{-d/\tau}\right)^t\E\left[D^0\right] 
= \left(1 -\beta+\beta e^{-d/\tau}\right)^t\sB(\mathbf{x}^*).$$
Thus, 
$\sB(\mathbf{x}^*) - \E[\sB(\mathbf{x}^t)] \leq  \left(1 -\beta+\beta e^{-d/\tau}\right)^t\sB(\mathbf{x}^*).$
\end{proof}

As in the original analysis of \ALGNAME, we will study the duality gap and analyze its decrease at each step of the algorithm. Consider some iteration $t$ of the algorithm. Recall that $\xv^t$ is the iterate at the start of iteration $t$. For every possible sequence of $e_1^t, \ldots, e_d^t$ (the trees edges chosen in iteration $t$), define the following:
\begin{itemize}
    \item Let $\hat{\xv}^{t+1}$ be the vector obtained from $\xv^t$ by adding ${\Delta}^t(V^t_i)$ to every vertex in $V^t_i$, where  ${\Delta}^t := \Lm_{H^t}^\dag \bv^t_{H^t}$. 
    \item Let $\bar{\xv}^{t+1}$ be obtained from $\xv^t$ by applying the updates for the sequence of tree edges $e_1^t, \ldots, e_d^t$, \emph{one by one}. (i.e. Exactly as in the original, unbatched version of \ALGNAME\ described in \Cref{sec:alg}.)
\end{itemize}

\begin{lemma}
\label{lem:errs}
Fix any choice of $e^t_1, \ldots, e^t_d$, and assume that $\wtd{\Lm}_{H^t}$ is a $\gamma$-approximate sparsifier of $\Lm_{H^t}$. Then
$$\sB(\xv^{t+1}) - \sB(\xv^t) \geq  (1-\alpha)\left(\sB(\hat{\xv}^{t+1}) - \sB(\xv^t)\right)$$
where $\alpha = 4\epsilon'\cdot\frac{1+\gamma}{1-\gamma}\cdot\left(1+\frac{\gamma^2}{(1-\gamma)^2}\right) + \frac{2\gamma^2}{(1-\gamma)^2}$. 

If we further assume that $\gamma \in (0,\frac12)$, we can simplify to get
$$\sB(\xv^{t+1}) - \sB(\xv^t) \geq  (1-12\epsilon'-8\gamma^2-48\epsilon'\gamma^2)\left(\sB(\hat{\xv}^{t+1}) - \sB(\xv^t)\right).$$
\end{lemma}
\begin{proof}
To simplify notation, in this proof we will use 
\begin{itemize}
    \item $\bv_H$ to denote ${\bv^t_{H^t}}$, 
    \item $\tilde{\Delta}_{\epsilon'}$ to denote $\tilde{\Delta}_{\epsilon'}^t$,
    \item $\tilde{\Delta}$ to denote $\tilde{\Delta}^t$
    \item $\Delta$ to denote $\Delta^t$,
    \item $H$ to denote $H^t$,
\end{itemize}

Later in this proof, we will show that 
\begin{equation}
\label{eq:delta_approx}
\norm{\tilde{\Delta}_{\epsilon'} - \Delta}_{\Lm_H}^2 \leq \alpha \norm{\Delta}_{\Lm_H}^2,
\end{equation}
for a constant $\alpha$ that depends on $\epsilon'$ and $\gamma$. Assuming (\ref{eq:delta_approx}) holds, by the definition of the matrix norm it follows that
$$\left(\tilde{\Delta}_{\epsilon'} - \Delta\right)^T\Lm_H\left(\tilde{\Delta}_{\epsilon'} - \Delta\right) \leq \alpha \Delta^T\Lm_H\Delta.$$
Expanding the left-hand side and rearranging, we get
$$2\tilde{\Delta}_{\epsilon'}^T\Lm_H \Delta - \tilde{\Delta}_{\epsilon'}^T\Lm_H \tilde{\Delta}_{\epsilon'} \geq (1 - \alpha) \Delta^T\Lm_H\Delta.$$
Using $\Lm_H \Delta = \bv_H$, this becomes
$$2\tilde{\Delta}_{\epsilon'}^T\bv_H - \tilde{\Delta}_{\epsilon'}^T \Lm_H\tilde{\Delta}_{\epsilon'} \geq (1 - \alpha) \Delta^T\Lm_H\Delta.$$
Recall that $\xv^{t+1}$ is obtained from $\xv^t$ by adding $\tilde{\Delta}_{\epsilon'}(V_i^t)$ to every vertex in $V_i^t$. 
Using Lemma \ref{lem:dual_incr}
with $\Delta(i) = \wtd{\Delta}_{\epsilon'}(V_i^t)$, $\xv = \xv^t$,  $\wtd{\xv} = \xv^{t+1}$, $\wtd{\Lm} = \Lm_H$, and $\wtd{\bv} = \bv_H$ it follows that the left-hand side is equal to $\sB(\xv^{t+1}) - \sB(\xv^t)$. 
On the other hand, $\sB(\hat{\xv}^{t+1}) - \sB(\xv^t) = 2\bv_H^T\Delta - \Delta^T\Lm_H\Delta = \Delta^T\Lm_H\Delta$. (Since $\Lm_H \Delta = \bv_H$.)
Thus, the right-hand side is equal to $(1-\alpha)\left(\sB(\hat{\xv}^{t+1}) - \sB(\xv^{t})\right)$. Thus we have
$$\sB(\xv^{t+1}) - \sB(\xv^t) \geq  (1 - \alpha)\left(\sB(\hat{\xv}^{t+1}) - \sB(\xv^t)\right),$$
as claimed. 

It remains to prove (\ref{eq:delta_approx}). To prove (\ref{eq:delta_approx}), note that we have
\begin{enumerate}
    \item $\norm{\tilde{\Delta}_{\epsilon'} - \tilde{\Delta}}_{\wtd{\Lm}_H}^2 \leq \epsilon' \norm{\tilde{\Delta}}_{\wtd{\Lm}_H}^2$ (This is the error from the recursive solve).
    \item $\norm{\tilde{\Delta} - \Delta}^2_{\Lm_H} \leq h(\gamma) \norm{\Delta}_{\Lm_H}^2$ (Follows by part 2 of Proposition \ref{prop:spec_approx} in the Appendix. This is the error from sparsification). 
\end{enumerate}
The first inequality, together with $(1-\gamma)\Lm_H \preceq \wtd{\Lm}_H \preceq (1+\gamma)\Lm_H$ and part 1 of Proposition \ref{prop:spec_approx} in the Appendix, implies that 
$$\norm{\tilde{\Delta}_{\epsilon'} - \tilde{\Delta}}_{\Lm_H}^2 \leq
\frac{1}{1-\gamma}
\norm{\tilde{\Delta}_{\epsilon'} - \tilde{\Delta}}_{\wtd{\Lm}_H}^2 \leq
\frac{\epsilon'}{1-\gamma} \norm{\tilde{\Delta}}_{\wtd{\Lm}_H}^2\leq \epsilon' \cdot \frac{1+\gamma}{1-\gamma} \cdot \norm{\tilde{\Delta}}^2_{\Lm_H}.$$

Now, using the inequality $\norm{a+b}^2 \leq 2\norm{a}^2 + 2\norm{b}^2$ (which holds for any norm), we note that
$$\norm{\tilde{\Delta}}^2_{\Lm_H} \leq 2\norm{\Delta}_{\Lm_H}^2 + 2\norm{\tilde{\Delta} -\Delta}_{\Lm_H}^2 \leq 2\norm{\Delta}_{\Lm_H}^2 +2h(\gamma)\norm{\Delta}_{\Lm_H}^2.$$
Hence,
$$\norm{\tilde{\Delta}_{\epsilon'} - \tilde{\Delta}}_{\Lm_H}^2 \leq 2\epsilon' \cdot \frac{1+\gamma}{1-\gamma} \cdot (1+h(\gamma)) \norm{\Delta}_{\Lm_H}^2.$$
Again using $\norm{a+b}^2 \leq 2\norm{a}^2 + 2\norm{b}^2$, we have
\begin{align*}
    \norm{\tilde{\Delta}_{\epsilon'} - {\Delta}}_{\Lm_H}^2
    &\leq 2\left(\norm{\tilde{\Delta}_{\epsilon'} - \tilde{\Delta}}_{\Lm_H}^2 + \norm{\tilde{\Delta} - {\Delta}}_{\Lm_H}^2\right) \\
    &\leq 2\left(2\epsilon' \cdot \frac{1+\gamma}{1-\gamma} \cdot (1+h(\gamma)) \norm{\Delta}_{\Lm_H}^2 + h(\gamma)\norm{\Delta}_{\Lm_H}^2\right) \\
    &= \left(4\epsilon' \cdot \frac{1+\gamma}{1-\gamma} \cdot (1+h(\gamma)) + 2h(\gamma) \right)\norm{\Delta}_{\Lm_H}^2.
\end{align*}
Therefore, (\ref{eq:delta_approx}) holds with $\alpha =4\epsilon' \cdot \frac{1+\gamma}{1-\gamma} \cdot (1+h(\gamma)) + 2h(\gamma)$.
\end{proof}
\begin{lemma}
\label{lem:gap_reduce}
For any vector $\xv^t$, we have
$$\sB(\xv^*) - \E[\sB({\xv}^{t+1})] \leq \left(1 -\beta+\beta e^{-d/\tau}\right) \left(\sB(\xv^*) - \sB(\xv^t)\right),$$
where $\beta = (1-\frac{1}{n_0^2})(1-\alpha)$. 

Here, the expectation is taken over the random choices of $e_1^t, \ldots, e^t_d$, and also over the randomness of the sparsification step. (Recall that we use the sparsifier given in Theorem 1.2 of~\cite{kyng17sparsify}, which successfully returns a $\gamma$-approximate sparsifier with probability at least $1 - \frac{1}{\abs{V(H^t)}^2}$.)

\end{lemma}
\begin{proof}
By Lemma \ref{lem:gap_decreases}, we know that
$$\sB(\xv^*) - \E[\sB(\bar{\xv}^{t+1})] \leq \left(1 - \frac1\tau\right)^d\left(\sB(\xv^*) - \sB(\xv^t)\right) \leq e^{-\frac{d}{\tau}}\left(\sB(\xv^*) - \sB(\xv^t)\right).$$
Rearranging, this is equivalent to
$$\E[\sB(\bar{\xv}^{t+1})] - \sB(\xv^t) \geq \left(1 - e^{-\frac{d}{\tau}} \right)\left(\sB(\xv^*) - \sB(\xv^t)\right).$$
Observe that for every realization of $e^t_1, \ldots, e^t_d$, we have $\sB(\hat{\xv}^{t+1}) \geq \sB(\bar{\xv}^{t+1})$. This is because $\hat{\xv}^{t+1} - \xv^t = \Delta^t$, where $\Delta^t$ by definition is the vector that \emph{maximizes} the increase $\sB(\hat{\xv}^{t+1}) - \sB(\xv^t)$ while subject to being incremented by the same amount on each of the components $V^t_0, \ldots, V^t_d$. On the other hand, the vector $\bar{\xv}^{t+1} - \xv^t$ is also  incremented by the same amount on each of the components $V^t_0, \ldots, V^t_d$ by the way our original algorithm works.

Since $\sB(\hat{\xv}^{t+1}) \geq \sB(\bar{\xv}^{t+1})$ holds for every realization of $e^t_1, \ldots, e^t_d$, it follows that $\E[\sB(\hat{\xv}^{t+1})] \geq \E[\sB(\bar{\xv}^{t+1})]$, where the expectation is taken over the random choices of $e^t_1, \ldots, e^t_d$ made by the algorithm. Hence,
\begin{equation}
\label{eq:opt_incr}
\E[\sB(\hat{\xv}^{t+1})] - \sB(\xv^t) \geq \left(1 - e^{-\frac{d}{\tau}}\right) \left(\sB(\xv^*) - \sB(\xv^t)\right).
\end{equation}
To conclude, we will use Lemma \ref{lem:errs} to translate the above inequality (which is in terms of $\hat{\xv}^{t+1}$), to an inequality in terms of $\xv^{t+1}$. We have
\begin{itemize}
    \item With probability $\geq 1 - \frac{1}{n_0^2}$, the sparsifier is successful and by Lemma \ref{lem:errs}, $$\E[\sB(\xv^{t+1})] - \sB(\xv^t) \geq (1-\alpha)\left(\E[\sB(\hat{\xv}^{t+1})] - \sB(\xv^t) \right).$$
    \item With probability $\leq \frac{1}{n_0^2}$, the sparsifier is unsuccessful and $\E[\sB(\xv^{t+1})] - \sB(\xv^t) \geq 0$. This is because in the algorithm, we evaluate $\sB(\xv^{t+1})$ and only update $\xv^t$ to $\xv^{t+1}$ if $\sB(\xv^{t+1}) \geq \sB(\xv^t)$. Otherwise, we make $\xv^{t+1} = \xv^t$. 
\end{itemize}
Note that the above expectations are with respect to the random choices of $e^t_1, \ldots, e^t_d$, conditioned on the sparsifier being successful/unsuccessful. Now, taking another expectation with respect to the randomness of the sparsifier, we get
\begin{align*}
\E[\sB(\xv^{t+1})] - \sB(\xv^t) &\geq \left(1-\frac{1}{n_0^2}\right)(1-\alpha)\left(\E[\sB(\hat{\xv}^{t+1})] - \sB(\xv^t) \right) \\
&\geq \left(1-\frac{1}{n_0^2}\right)(1-\alpha)\left(1 - e^{-\frac{d}{\tau}}\right) \left(\sB(\xv^*) - \sB(\xv^t)\right) &\text{(by (\ref{eq:opt_incr}))}
\end{align*}
Rearranging the above inequality gives
$$\sB(\xv^*) - \E[\sB({\xv}^{t+1})] \leq \left(1 - \left(1-\frac{1}{n_0^2}\right)(1-\alpha)\left(1 - e^{-\frac{d}{\tau}}\right)\right) \left(\sB(\xv^*) - \sB(\xv^t)\right),$$
as claimed.
\end{proof}

\subsubsection{Running Time Analysis}
The following theorem bounds the running time of the algorithm.
\fastest*


\begin{proof}
By \Cref{lem:err}, it suffices to run the algorithm for $K$ iterations, for any $K \geq \frac{\ln \epsilon}{\ln\left(\frac{1}{5} + \frac{4}{5}e^{-d/\tau}\right)}$.

Using the inequalities $e^{-x} \leq 1 - \frac{x}{2}$ and $\ln(1-x) \leq -\frac{x}{2}$ which hold for $x\in(0,1)$, we see that it suffices to choose $K = \frac{5\tau}{d} \ln(1/\epsilon)$. 

Recall $\tau \leq c_3m\log^2n$, for some constant $c_3$. Thus, we choose 
$d = c_3\bar{m}(\log^2n)\cdot{m}^{-\delta}.$ Here, $\bar{m}$ is the number of edges in the \emph{current iteration}, while ${m}$ is the number of edges in the \emph{topmost} iteration (i.e. in the original graph $G$).
With this choice of $d$, we have $K = 5c_3m^{\delta}\ln(1/\epsilon).$ (Note that $K$ is the same at every level of the recursion tree.)

The work involved in each call to the algorithm consists of 
\begin{itemize}
    \item Computing $T$,
    \item Doing $K$ times
    \begin{itemize}
    \item Contracting and sparsifying to a graph with $d$ vertices and $a_1 d \log^cn / \gamma^2$ edges, for some constant $a_1$.
    \item Doing a recursive call.
    \end{itemize}
\end{itemize}
If $m$ is the number of edges in the graph at one level of the recursion tree, then at the next level, the number of edges is
$$a_1 d \log^2n\log\log n / \gamma^2 \leq a_1c_3\bar{m}(\log^2n)\cdot{m}^{-\delta}\log^3n / \gamma^2 = a_1c_3\bar{m}\cdot{m}^{-\delta}(\log n)^{5}/\gamma^2.$$
Therefore, the number of edges in a graph at level $l$ of the recursion tree is
$$\textrm{edges}_l \leq {m}^{1-l\delta}(a_1c_3)^l(\log n)^{5l}/\gamma^{2l}.$$
The total work required by a node at level $l$ of the recursion tree is dominated by the sparsifier, which takes time
$$\textrm{work}_l = K\cdot a_2\cdot \textrm{edges}_l \cdot \log^{2}n\log\log n/ \gamma^2 \leq a_2\cdot \textrm{edges}_l \cdot \log^{3}n/ \gamma^2$$
for some constant $a_2$.

Finally, the total number of recursion tree nodes at level $l$ is equal to $K^l$. This implies that the total work required by all the nodes at level $l$ of the recursion tree is equal to 
\begin{align*}
\textrm{total work}_l &= \textrm{work}_l \cdot K^l \\
&\leq K\cdot a_2\cdot \textrm{edges}_l \cdot \log^{3}n\cdot \frac{1}{\gamma^2} \cdot K^l \\
&\leq K\cdot a_2\cdot \left({m}^{1-l\delta}(a_1c_3)^l(\log n)^{5l}/\gamma^{2l}\right) \cdot \log^{3}n\cdot \frac{1}{\gamma^2} \cdot K^l\\
&= 5^{l+1}c_3^{l+1}{m}^{1+\delta}(\ln 1/\epsilon)^{l+1}a_2(a_1c_3)^l(\log n)^{(5l+3)} \cdot \frac{1}{\gamma^{2l+2}} \\
&\leq A^l {m}^{1+\delta}(\log n)^{8l}(\ln 1/\epsilon)^{l+1},
\end{align*}
for some numerical constant $A$.

To conclude, we note that the total work summed across all the levels is at most a constant factor times the total work at the maximum level, which is $l = \frac{1}{\delta}$. 

\end{proof}

\begin{remark}
By setting $\delta = \sqrt{\frac{8\log\log n + \log\log\frac{1}{\epsilon}}{\log m}}$ (which is the choice of $\delta$ that minimizes $m^{1+\delta}(\log n)^{\frac{8}{\delta}}(\log \frac{1}{\epsilon})^{\frac{1}{\delta}}$), the running time in \Cref{thm:time} becomes $O\left(m \exp\left(2\sqrt{\log m(8\log\log n + \log\log \frac{1}{\epsilon}) }\right)\right)$, which is $O(m^{1+o(1)})$ if $\epsilon$ is a constant.
\end{remark}

\section{Conclusion} \label{sec:conclusion}
We propose a cut-based combinatorial algorithm to solve Laplacian systems approximately. This algorithm is dual to the cycle-based algorithm by Kelner et al. \cite{KOSZ13}. We show that our algorithm converges in a near-linear number of iterations. 


To achieve a near-linear running time, we would further need each iteration to run in polylogarithmic time. We give evidence against this, by presenting a reduction from the OMv conjecture. This is in contrast to the algorithm in \cite{KOSZ13}, which uses a data structure such that each iteration of the algorithm runs in $O(\log n)$ time.   In order to obtain a better running time, one would need to  show it is possible take advantage of the particular structure of updates in the algorithm to implement the data structure.  Note that our reduction crucially needs that a very specific spanning tree (albeit with very small stretch) is chosen. Is it possible for the algorithm to choose a different small-stretch spanning tree that is amendable to a polylogarithmic time implementation?

\section{Acknowledgements}
Monika Henzinger was supported by funding from the European Research Council (ERC) under the European Union's Horizon 2020 research and innovation programme (Grant agreement No. 101019564 ``The Design of Modern Fully Dynamic Data Structures (MoDynStruct)'' and from the Austrian Science Fund (FWF) project ``Fast Algorithms for a Reactive Network Layer (ReactNet)'', P~33775-N, with additional funding from the \textit{netidee SCIENCE Stiftung}, 2020--2024. \phantom{} {\includegraphics[width=60pt]{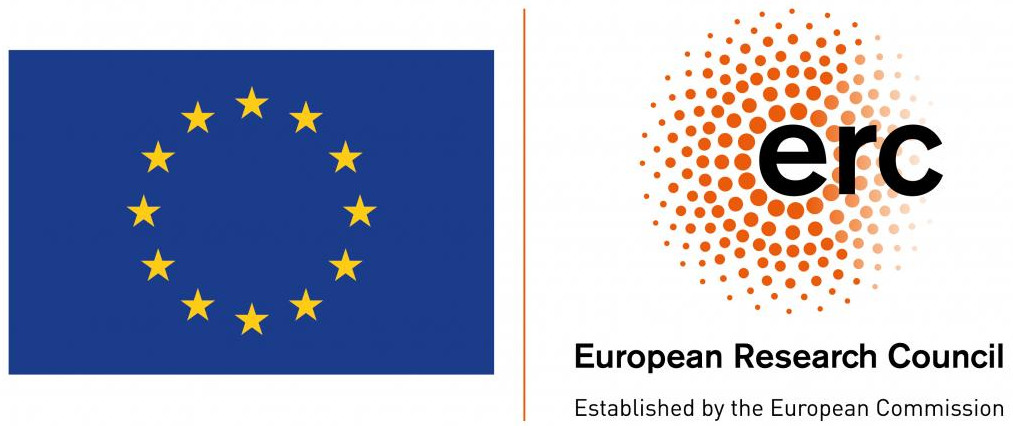}
}

Billy Jin was Supported in part by NSERC fellowship PGSD3-532673-2019 and NSF grant CCF-2007009.

Richard Peng was supported in part by an NSERC Discovery Grant and NSF grant CCF-1846218.

David P. Williamson was supported in part by NSF grant CCF-2007009.

\bibliographystyle{alpha} 
\bibliography{references}



    
\appendix

\section{Omitted Proofs from Section \ref{sec:alg}}

\enerincr*

\begin{proof}
The way we update $\mathbf{x}$ is by adding a constant $\Delta$ to the potentials of every vertex in $C$, where 
$$
\Delta = (b(C) - f(C))\cdot R(C)
$$
Recall that $f(C)$ is the net amount of flow going out of $C$ in the flow induced by $\mathbf{x}$. That is,
$$f(C) = \sum_{\substack{ij \in E \\ i \in C,\, j \not\in C}} \frac{x(i) - x(j)}{r(i, j)}$$
Note that the new potentials $\mathbf{x}'$ can be expressed as $\mathbf{x}' = \mathbf{x} + \Delta \one_C$. 
We have
\begin{align*}
    2(\sB(\mathbf{x}') - \sB(\mathbf{x}))
    &= 2\mathbf{b}\trans \mathbf{x}' - (\mathbf{x}')\trans \mathbf{L}\mathbf{x}' - (2\mathbf{b}\trans \mathbf{x} - \mathbf{x}\trans \mathbf{L}\mathbf{x}) \\
    &= 2\mathbf{b}\trans(\mathbf{x} + \Delta\cdot\one_C) - 2\mathbf{b}\trans \mathbf{x} - (\mathbf{x}')\trans \mathbf{L}\mathbf{x}' + \mathbf{x}\trans \mathbf{L}\mathbf{x} \\
    &= 2\Delta\cdot \mathbf{b}\trans \one_C - \sum_{ij \in E}\frac{1}{r(i, j)}\left[(x'(i) - x'(j))^2 - (x(i) - x(j))^2 \right] \\
    &= 2\Delta\cdot \mathbf{b}\trans \one_C - \sum_{ij \in \delta(C)}\frac{1}{r(i, j)}\left[(x'(i) - x'(j))^2 - (x(i) - x(j))^2 \right] \\
    &= 2\Delta\cdot \mathbf{b}\trans \one_C - \sum_{\substack{i \in C, \, j \not\in C \\ ij \in \delta(C)}}\frac{1}{r(i, j)}\left[(x(i)+\Delta - x(j))^2 - (x(i) - x(j))^2 \right] \\
    &= 2\Delta \cdot \mathbf{b}\trans \one_C - \sum_{\substack{i \in C, \, j \not\in C \\ ij \in \delta(C)}}\frac{1}{r(i,j)}\left[2\Delta\cdot(x(i)-x(j)) +\Delta^2\right] \\
    &= 2\Delta \cdot \mathbf{b}\trans \one_C - 2\Delta\cdot f(C) - \Delta^2 \sum_{(i,j) \in \delta(C)} \frac{1}{r(i,j)} \\
    &= 2\Delta \cdot \mathbf{b}\trans \one_C - 2\Delta\cdot f(C) - \Delta^2\cdot R(C)^{-1} \\
    &= 2\Delta \cdot b(C) - 2\Delta\cdot f(C) - \Delta^2 R(C)^{-1} \\
    &= 2\Delta ^2 R(C)^{-1} - \Delta^2 R(C)^{-1} \\
    &= \Delta^2/R(C).
\end{align*}
\end{proof}

\gapp*

\begin{proof}
By definition, we have
\begin{align*}
   2\gap(\mathbf{f}, \mathbf{x})
    &= \sum_{e \in E} r(e)f(e)^2 - (2\mathbf{b}\trans \mathbf{x} - \mathbf{x}\trans \mathbf{L}\mathbf{x}).
\end{align*}
Note that 
\begin{align*}
    \mathbf{b}\trans \mathbf{x} &= \sum_{i \in V}b(i)x(i) \\
    &= \sum_{i \in V} x(i) \left(\sum_{j: (i,j) \in \vec{E}} f(i,j) - \sum_{j:(j,i) \in \vec{E}} f(j,i)\right) \\
    &= \sum_{(i,j) \in \vec{E}}f(i,j)(x(i) - x(j))
    \end{align*}
    and 
    \begin{align*}
    \mathbf{x}\trans \mathbf{L}\mathbf{x} &= 
    \sum_{(i,j) \in \vec{E}} \frac{(x(i) - x(j))^2}{r(i,j)}.
    \end{align*}
    Plugging these into our expression for $\gap(\mathbf{f}, \mathbf{x})$, we obtain 
    \begin{align*}
        2\gap(\mathbf{f}, \mathbf{x})
        &= \sum_{(i,j) \in \vec{E}}\left[r(i,j)f(i,j)^2 - 2f(i,j)(x(i) - x(j)) + \frac{(x(i) - x(j))^2}{r(i,j)}\right] \\
        &= \sum_{(i,j) \in \vec{E}} r(i,j)\left(f(i,j) - \frac{x(i)-x(j)}{r(i,j)}\right)^2
    \end{align*}
    which is what we wanted to show.

\end{proof}

\gaplem*

\begin{proof}
Recall that $C(i,j)$, $\Delta(C(i,j))$ and $R(C(i,j))$ were defined as follows:
\begin{itemize}
    \item $C(i,j)$ is the set of vertices on the side of the fundamental cut of $T$ determined by $(i,j)$ containing $i$. In other words, $C(i,j)$ consists of the vertices in the component of $T - ij$ with $i \in C(i,j)$ and $j \not\in C(i,j)$.
    \item $R(C(i,j)) = \left(\sum_{ij \in \delta(C)} \frac{1}{r(i,j)}\right)^{-1}$.
    \item $\Delta(C(i,j)) = (b(C(i,j)) - f(C(i,j)))R(C(i,j))$, where
    \begin{itemize}
        \item $b(C(i,j)) = \mathbf{b}\trans \one_{C(i,j)}$, and 
        \item $f(C(i,j)) = \displaystyle\sum_{\substack{k \in C(i,j), \, l \not\in C(i,j) \\ kl \in E}} \frac{x(k) - x(l)}{r(k, l)}$
    \end{itemize}
\end{itemize}
We have 
\begin{align*}
    2\gap(\fv_{T, \xv}, \mathbf{x})
    &= \sum_{(i, j) \in E} r(i, j) \left(f_{T,\xv}(i, j) - \frac{x(i)-x(j)}{r(i, j)}\right)^2 \\
    &= \sum_{(i, j) \in T} r(i, j) \left(f_{T,\xv}(i, j) - \frac{x(i)-x(j)}{r(i, j)}\right)^2 \\
    &= \sum_{(i, j) \in T} r(i, j) \left[\left(b(C(i, j)) - \sum_{\substack{ k \in C(i, j), l \not\in C(i, j)\\kl \in E - ij }} \frac{x(k) - x(l)}{r(k, l)}\right) - \frac{x(i) - x(j)}{r(i, j)} \right]^2 \\
    &= \sum_{(i, j) \in T} r(i, j) \left[b(C(i,j)) - \sum_{\substack{k \in C(i, j), l \not\in C(i, j)\\kl \in E}} \frac{x(k) - x(l)}{r(k, l)}\right]^2 \\
    &= \sum_{(i, j) \in T} r(i, j) \left[b(C(i,j)) - f(C(i, j))\right]^2 \\
    &= \sum_{(i, j) \in T} r(i, j)\cdot \frac{\Delta(C(i, j))^2}{R(C(i, j))^2}
\end{align*}
\end{proof}

\gapdecr*

\begin{proof}
We know from the discussion above that
$$\E[\sB(\mathbf{x}^{t+1})] - \sB(\mathbf{x}^t) = \frac{1}{\tau}\gap(\fv_{T, \xv}, \mathbf{x}^t),$$
where $\fv_{T, \xv}$ is the tree-defined flow associated with potentials $\mathbf{x}^t$. Since $\gap(\fv_{T, \xv}, \mathbf{x}^t) \geq \sB(\mathbf{x}^*) - \sB(\mathbf{x}^t)$, we get
$$\E[\sB(\mathbf{x}^{t+1})] - \sB(\mathbf{x}^t) \geq \frac{1}{\tau}\left(\sB(\mathbf{x}^*) - \sB(\mathbf{x}^t)\right).$$
Rearranging gives
$$\sB(\mathbf{x}^*) - \E[\sB(\mathbf{x}^{t+1})] \leq \left(1 - \frac1\tau\right)\left(\sB(\mathbf{x}^*) - \sB(\mathbf{x}^t)\right),$$
as desired. 
\end{proof}

\finalgap*

\begin{proof}
Define the random variable $D_t := \sB(\mathbf{x}^*) - \sB(\mathbf{x}^t)$. By Lemma \ref{lem:gap_decreases}, we know that
$$\E\left[D^{t+1} \mid \mathbf{x}^t\right] \leq \left(1 - \frac{1}{\tau}\right)\E\left[D^t \mid \mathbf{x}^t\right]$$
for all possible vectors of potentials $\mathbf{x}^t$. This implies that $\E\left[D^{t+1}\right] \leq \left(1 - \frac{1}{\tau}\right)\E\left[D^t \right]$ unconditionally. 

By induction on $t$, it then follows that
$$\E\left[D^K\right] \leq \left(1 - \frac1\tau\right)^K\E\left[D^0\right] =\left(1 - \frac1\tau\right)^K\left(\sB(\mathbf{x}^*) - \sB(\mathbf{x}^0)\right) = \left(1 - \frac1\tau\right)^K\sB(\mathbf{x}^*).$$
Thus, 
$$\sB(\mathbf{x}^*) - \E[\sB(\mathbf{x}^K)] \leq  \left(1 - \frac1\tau\right)^K\sB(\mathbf{x}^*).$$

Using the inequality $1-x \leq e^{-x}$, we obtain
$$\sB(\mathbf{x}^*) - \E[\sB(\mathbf{x}^K)] \leq e^{-K/\tau} \sB(\mathbf{x}^*).$$
Hence, if $K \geq \tau\ln(\frac{1}{\epsilon})$, then we will have $\sB(\mathbf{x}^*) - \E[\sB(\mathbf{x}^K)] \leq \epsilon\cdot\sB(\mathbf{x}^*)$, as desired. 
\end{proof}

\dualkosz*
    \begin{proof}[Proof of \Cref{thm:dual_kosz}]
     By Corollary \ref{cor:final_gap}, after $K=\tau\ln(\frac{\tau}{\epsilon})$ iterations, the algorithm returns potentials  $\xv^K$ such that $\sB(\mathbf{x}^*) - \E[\sB(\mathbf{x}^K)] \leq \frac{\epsilon}{\tau} \cdot \sB(\mathbf{x}^*)$. Combining with Lemma \ref{lem:energy_to_potential}, we get that $\E \norm{\mathbf{x}^* - \mathbf{x}^K}_\mathbf{L}^2 \leq \frac{\epsilon}{\tau}\norm{\mathbf{x}^*}_\mathbf{L}^2$. Finally, \Cref{lem:rounding_error} gives $\E\left[\sE(\fv^K)\right] \leq (1+\epsilon)\sE(\fv^*)$.  
     
    \end{proof}

   \begin{restatable}{lemma}{stretch}
    \label{lem:stretch}
    We have $\tau = \st_T(G, \rv)$. 
    \end{restatable}

\begin{proof}
    We write out the definitions of $\tau$ and $\st_T(G, \rv)$:
    $$\tau = \sum_{(i, j) \in T} \frac{r(i, j)}{R(C(i, j))} = \sum_{(i, j) \in T} r(i, j)\sum_{(k,l) \in \delta(C(i, j))} \frac{1}{r(k, l)}$$
    and
    $$
    \st_T(G, \rv) = \sum_{(i, j) \in \vec{E}} \st_T((i, j), \rv)
    = \sum_{(i, j) \in \vec{E}} \frac{1}{r(i, j)}\sum_{(k, l) \in P(i,j)} r(k, l),
    $$
    where $P(i, j)$ is the unique path from $i$ to $j$ in $T$. 
    
    It turns out that the expressions for $\tau$ and $\st_T(G)$ are summing exactly the same terms, just in different ways. Indeed, we have
    \begin{align*}
        \tau &= \sum_{(i, j) \in T} \sum_{(k,l) \in \delta(C(i, j))}\frac{r(i, j)}{r(k, l)} \\
        &= \sum_{(k, l) \in \vec{E}} \sum_{(i, j) \in P(k, l)} \frac{r(i, j)}{r(k, l)} \\
        &= \st_T(G, \rv). 
    \end{align*}
    To switch the order of summation from the first line to the second line, we used the fact that for an edge $(k, l) \in \vec{E}$, we have $(k,l) \in \delta(C(i, j))$ if and only if $(i, j) \in P(k, l)$. This is because $T$ is a spanning tree. 
    
    \end{proof}

    By Corollary \ref{cor:final_gap}, we know that the potentials $\mathbf{x}^t$ found by the algorithm satisfy the property that $\sB(\mathbf{x}^t)$ converges to $\sB(\mathbf{x}^*)$ at a linear rate, in expectation. The following lemma shows that if $\mathbf{x}$ is a set of potentials such that $\sB(\mathbf{x})$ is close to $\sB(\mathbf{x}^*)$, then $\mathbf{x}$ is close to $\mathbf{x}^*$ as a vector (measured in the matrix norm defined by the Laplacian $\mathbf{L}$). 
    
    \begin{restatable}{lemma}{enertopot}
    \label{lem:energy_to_potential}
    Let $\mathbf{x}$ be any vector of potentials. Then 
    $\frac12\norm{\mathbf{x}^* - \mathbf{x}}_\mathbf{L}^2 = \sB(\mathbf{x}^*) - \sB(\mathbf{x}).$
    In particular, if $\sB(\mathbf{x}^*) - \sB(\mathbf{x}) \leq \epsilon \cdot \sB(\mathbf{x}^*)$, then $\norm{\mathbf{x}^* - \mathbf{x}}_\mathbf{L}^2 \leq  \epsilon \norm{\mathbf{x}^*}_\mathbf{L}^2.$
    \end{restatable}

\begin{proof}
    We have 
    \begin{align*}
        \norm{\mathbf{x}^* - \mathbf{x}}_L^2
        &= (\mathbf{x}^* - \mathbf{x})\trans \mathbf{L} (\mathbf{x}^* - \mathbf{x}) \\
        &= (\mathbf{x}^*)\trans \mathbf{L}\mathbf{x}^* - 2\mathbf{x}\trans \mathbf{L}\mathbf{x}^* + \mathbf{x}\trans \mathbf{L}\mathbf{x} \\
        &= 2\sB(\mathbf{x}^*) - 2\mathbf{x}\trans \mathbf{b} + \mathbf{x}\trans \mathbf{L}\mathbf{x} \\
        &= 2\sB(\mathbf{x}^*) - 2\sB(\mathbf{x}).
    \end{align*}
    In particular, if $\sB(\mathbf{x}^*) - \sB(\mathbf{x}) \leq \epsilon \cdot \sB(\mathbf{x}^*)$, then $\norm{\mathbf{x}^* - \mathbf{x}}_\mathbf{L}^2 \leq 2\epsilon \cdot \sB(\mathbf{x}^*) = \epsilon \norm{\mathbf{x}^*}_\mathbf{L}^2$.
    This is because
    $$2\sB(\mathbf{x}^*) = 2\mathbf{b}\trans \mathbf{x}^* - (\mathbf{x}^*)\trans \mathbf{L}\mathbf{x}^* = (\mathbf{x}^*)\trans \mathbf{L}\mathbf{x}^* =  \norm{\mathbf{x}^*}_\mathbf{L}^2.$$
    \end{proof}
    
    Next, we show that if $\sB(\xv)$ is sufficiently close to $\sB(\xv^*)$, then the associated tree-defined flow $\fv_{T, \xv}$ has energy sufficiently close to $\sE(\fv^*)$.  
    
    \begin{restatable}{lemma}{rounderror}
    \label{lem:rounding_error}
    For any distribution over $\xv$ such that $\E_{\xv}[\sB(\xv)] \geq (1-\frac{\epsilon}{\tau})\sB(\xv^*)$, we have $\E_{\xv}[\sE(\fv_{T, \xv})] \leq (1+\epsilon)\sE(\fv^*)$.
    \end{restatable}

\begin{proof}
For ease of notation, in this proof let $\fv = \fv_{T, \xv}$. (Note that $\fv$ is a random vector that is a function of $\xv$.)
We have $\E_{\xv}[\sE(\fv) - \sE(\fv^*)] = \E_\xv[\gap(\fv, \xv^*)]$. 

For a fixed choice of $\xv$, consider running the algorithm for one more iteration starting from $\xv$ to obtain a vector $\xv'$. Then we have $\E[\sB(\xv')] - \sB(\xv) = \frac1\tau \gap(\fv, \xv)$. This implies $\sB(\xv^*) - \sB(\xv) \geq \frac1\tau \gap(\fv, \xv)$. Taking expectations with respect to $\xv$, we get $\E_{\xv}[\sB(\xv^*) - \sB(\xv)] \geq \frac1\tau\E_{\xv}[ \gap(\fv, \xv)]$.
Thus,
\begin{align*}
    \E_{\xv}[\sE(\fv) - \sE(\fv^*)] &=  \E_{\xv}[\gap(\fv, \xv^*)] \\
     &=   \E_{\xv}[\gap(\fv, \xv) - (\sB(\xv^*) - \sB(\xv))]  \\
    &\leq  (\tau-1)\E_\xv[\sB(\xv^*) - \sB(\xv)] \\
    &\leq   \tau\E_\xv[\sB(\xv^*) - \sB(\xv)] \\
    &\leq  \epsilon\sB(\xv^*) \\
    &= \epsilon \sE(\fv^*).
\end{align*}
\end{proof}


    
%
    
\section{Omitted Proofs from Section \ref{sec:ds}}

\reduction*

\begin{proof}
Given an $n \times n$ Boolean matrix $\mathbf{M}$, we create the following \textit{TreeFlow} data structure.
The graph contains $2n+1$ nodes, namely a special node $x$,  one node $c_j$ for each column $j$ with
$1 \le j \le n$ and one node $d_i$ for each row $i$ with $1 \le i \le n$. There is  an edge $(d_i,c_j)$ if entry $M_{ij}$ = 1. Additionally, every node $c_j$ and every node 
$d_i$ has an edge to $x$. These edges are added to guarantee that the graph is connected. 
We set $r(c,d) = 1$ for every edge $(c,d)$ and
denote this graph by $G$. Let $T$ be the spanning tree of $G$ that is rooted at $x$ and consists of all the edges incident to $x$. Note that the subtree of $T$ rooted at any node $y \ne x$ consists of a single node $y$.

Now consider the sequence of $n$ vector pairs $(\mathbf{u}_t, \mathbf{v}_t)$ of the OuMv problem. Let $(\mathbf{u},\mathbf{v})$ be any such pair. We show below how to compute $\mathbf{u^\top} \mathbf{M} \mathbf{v}$ with $O(n)$ operations in the \textit{TreeFlow} data structure. 
Thus the sequence of $n$ vector pairs leads to $O(n^2)$ operations.
It then follows from the OMv conjecture and Lemma~\ref{conj:omv} that this sequence of $O(n^2)$ operations in the
\textit{TreeFlow} data structure cannot take time $O(n^{3-\epsilon})$, i.e., that it is not
possible that the complexity of both the $\textsf{addvalue}$ operation
and the $\textsf{findflow}$ operation are
$O(n^{1-\epsilon})$.

It remains to show how to compute $\mathbf{u^\top} \mathbf{M} \mathbf{v}$ with $O(n)$ operations in the \textit{TreeFlow} data structure. 
Initially the value $\mathsf{value}(v)$ of all nodes $v$ is 0. Let $Z$ be a large enough constant that we will specify later. First, increase the value of  all nodes to $Z$ by calling $\mathsf{addvalue}(x, Z)$. 

When given $(\mathbf{u},\mathbf{v})$ we decrease the value of each row node $d_i$ with $u_i = 1$ to 0  by 
calling $\textsf{addvalue}(d_i, -Z).$
Then, we perform a $\textsf{findflow}(c_j)$ operation for each column node $c_j$ with $v_j = 1$.
Afterwards we undo these operations again, so that  every node has value Z again. (Alternatively, we could also increase the value of every node $c_j$ with $v_j=1$ and every node $d_i$ with $u_i = 0$ to $2Z$, and instead of undoing the operations, we increase after the query the value of every node to $2Z$.  In which way we never execute an
$\textsf{addvalue}$ operation with negative second parameter.)

Note that $\mathbf{u^\top} \mathbf{M} \mathbf{v} = 1$ iff there exists an edge between a column node $c_j$ with $v_j = 1$ (i.e.  $\mathsf{value}(c_j)=Z$) and a row node $d_i$ with $u_i = 1$ (i.e.  $\mathsf{value}(d_i)=0$). 

We now show that $\mathbf{u^\top} \mathbf{M} \mathbf{v} = 1$ iff $f(c_j)>0$ for some column node $c_j$ with $v_j = 1$.  
(a) Assume first that  $\mathbf{u^\top} \mathbf{M} \mathbf{v} = 1$ and let $c^*$ denote a node $c_j$ and $d^*$ denote a node $d_i$ such that $v_j = 1$, $u_i = 1$ and $M_{ij} = 1.$ We will show that $f(c^*) > 0$.
Recall that the subtree of $c^*$ consists only of $c^*$. The edge $(c^*, d^*)$ leaves the subtree of $c^*$, contributing a positive amount to $f(c^*)$ because $\mathsf{value}(c^*)=Z$ and $\mathsf{value}(d^*)=0$.
All other edges leaving the subtree of $c^*$ contribute a non-negative amount to $f(c^*)$, since $\mathsf{value}(c^*)=Z$ and $\mathsf{value}(d_k)$ for other $k \ne i$ is either $Z$ or 0. Thus $f(c^*) > 0$.  
(b)
Assume next that  $\mathbf{u^\top} \mathbf{M} \mathbf{v} = 0$. In this case every node $c_j$ with $u_j = 1$ (and value $Z$) only has edges to nodes $d_i$ with $v_i = 0$ (and value $Z$). As before the subtree of every node $c_j$ only consists of $c_j$ and, thus, all edges leaving the subtree of $c_j$ contribute 0 to the flow out of the subtree. Thus, 
for every node $c_j$ with $u_j = 1$ we have $f(c_j) = 0$. 

To summarize  we have shown above that $\mathbf{u}\trans \mathbf{M} \mathbf{v} = 1$ iff $f(c_j) > 0$ for some column node $c_j$ with $\mathsf{value}(c_j) = Z$. We will now show how to use the results of the \textsf{findflow} queries returned by an $\alpha$-approximate \emph{TreeFlow} data structure to determine if $f(c_j)$ is positive or zero.  

Here is where we will choose the value of $Z$. The idea is to make $Z$ large enough so that if $f(c_j) > 0$, then $f(c_j)$ is very large. The idea is that this will allow us to distinguish between $f(c_j) = 0$ versus $f(c_j) > 0$, even if we only have access to an $\alpha$-approximation of $S(c_j) - f(c_j) = b(c_j) - f(c_j)$.

It will suffice to choose $Z$ large enough so that if $f(c_j) > 0$, then $f(c_j) > \max\{b(c_j), b(c_j)(1 - \alpha^2)\}$ (As $(1-\alpha^2) < 0$, the second term makes sense if $b(c_j) < 0$.) The value of $Z$ depends on $\alpha$, the supplies $\mathbf{b}$, and the resistances $\mathbf{r}$. For instance, it suffices to choose $Z > \norm{r}_\infty \norm{b}_\infty \alpha^2$. For this choice of $Z$, we have that if $f(c_j) > 0$ then (since it must have an edge to some $d_i$ with $\mathsf{value}(d_i) = 0$), 
$$f(c_j) \geq \frac{\mathsf{value}(c_j) - \mathsf{value}(d_i)}{r(c_j, d_i)} = \frac{Z-0}{r(c_j, d_i)} > \abs{b(c_j)}\alpha^2 > \max\{b(c_j),\, b(c_j)(1 - \alpha^2)\}.$$
Having chosen $Z$ this way, we have the following:
\begin{itemize}
    \item If $b(c_j) \geq 0$, then  $b(c_j) - f(c_j)$ is non-negative if $f(c_j) = 0$, and negative otherwise (because $f(c_j) > b(c_j)$ when $f(c_j) > 0$.) Any $\alpha$-approximation of $b(c_j) - f(c_j)$ allows us to correctly deduce the sign of $b(c_j) - f(c_j)$, hence also whether $b(c_j) -f(c_j) \ge 0$ or
    whether $b(c_j) - f(c_j) < 0$. From this we can deduce whether $f(c_j) = 0$ or $f(c_j) > 0$.
    \item Suppose $b(c_j) < 0$. If $f(c_j) = 0$, the approximate data structure returns an answer in the interval $[b(c_j)\cdot \alpha,\,\frac{ b(c_j)}{\alpha}]$. If $f(c_j) > 0$, it returns an answer in the interval $[(b(c_j) - f(c_j))\cdot \alpha,\, \frac{b(c_j) - f(c_j)}{\alpha}]$. Note that 
    the left endpoint of the first interval is to the right of the right endpoint of the second interval as 
    $f(c_j) > b(c_j)\left(1 - \alpha^2\right)$ implies that $$ \implies b(c_j)\cdot \alpha > \frac{b(c_j) - f(c_j)}{\alpha}.$$
    Since the two intervals for $f(c_j) = 0$ and $f(c_j) > 0$ do not overlap, we can correctly distinguish the two cases using the approximate data structure. 
\end{itemize}

To summarize, each \textsf{findflow} query on $c_j$ allows us to determine if $f(c_j) > 0$ or $f(c_j) = 0$. If the flow is positive for some $c_j$, then the answer is $\mathbf{u^\top} \mathbf{M} \mathbf{v} = 1$, otherwise it is 0.
Note that it requires $O(n)$ operations in the \textit{TreeFlow} data structure to determine one $\mathbf{u^\top} \mathbf{M} \mathbf{v}$ value, which completes the proof.
\end{proof}

\begin{remark}
Note that the proof can be modified to be more similar to the update sequence generated by \ALGNAME\ which alternates between \textsf{addvalue} and \textsf{findflow} operations by inserting
 after each \textsf{addvalue} operation  a \textsf{findflow} operation (whose answer might be ignored for the logic of the proof).
As mentioned, the proof can also be adapted so that the values stored at the nodes are only increased, but this is not necessary for our application.
%
%
\end{remark}

\section{Spectral Approximations}

Let $\Am, \Bm$ be $n \times n$ symmetric, positive semidefinite matrices. We say that $\Bm$ is a $\gamma$-spectral sparsifier of $\Am$ if
\[
\left( 1 - \gamma \right) \Am
\preceq
\Bm
\preceq
\left( 1 + \gamma \right) \Am.
\]


\begin{prop}[Spectral Approximations]
\label{prop:spec_approx}
Suppose $\Am, \Bm \in \mathbb{S}^n_+$ and $\left( 1 - \gamma \right) \Am
\preceq
\Bm
\preceq
\left( 1 + \gamma \right) \Am$. Let $\xv, \yv, \zv, \bv \in \R^n$. Then the following hold:
\begin{enumerate}
    \item $(1-\gamma) \norm{\xv}_{\Am}^2 \leq \norm{\xv}_{\Bm}^2 \leq (1+\gamma)\norm{\xv}_{\Am}^2$
    \item If $\Am\xv = \bv$ and $\Bm\yv = \bv$, then $\norm{\xv - \yv}_{\Am}^2 \leq h(\gamma) \norm{\xv}_{\Am}^2$, where $h(\gamma) = \frac{\gamma^2}{(1-\gamma)^2}$. 
\end{enumerate}
\end{prop}

\begin{proof}
The first one is by definition of $\norm{\xv}_{\Am}^2 = \xv^{\top} \Am \xv$.

For the second one, first we claim that it is sufficient to prove that $\norm{\Am^\dag \bv - \Bm^\dag \bv}_{\Am}^2 \leq h(\gamma) \norm{\Am^\dag \bv}_{\Am}^2$. This is because in general, we have $\xv = \Am^\dag \bv + \uv$, and $\yv = \Bm^\dag \bv + \vv$, for some  $\uv \in \Null(\Am)$ and $\vv \in \Null(\Bm)$. Moreover, the condition 
$
\left( 1 - \gamma \right) \Am
\preceq
\Bm
\preceq
\left( 1 + \gamma \right) \Am
$
implies that $\Null(\Am) = \Null(\Bm)$. Hence, $\norm{\xv - \yv}_{\Am}^2 = \norm{\Am^\dag \bv - \Bm^\dag \bv}_{\Am}^2$, and $\norm{\xv}_{\Am}^2 =  \norm{\Am^\dag \bv}_{\Am}^2$. 

Next, we expand
$\norm{\Am^\dag \bv - \Bm^\dag \bv}_{\Am}^2 \leq h(\gamma) \norm{\Am^\dag \bv}_{\Am}^2$ into 
$$
(\Am^\dag \bv - \Bm^\dag \bv)^T \Am (\Am^\dag \bv - \Bm^\dag \bv) \leq h(\gamma) \bv^T \Am^\dag \bv,
$$
or equivalently,
$$
\bv^T \left(\Am^\dag - \Bm^\dag\right) \Am \left(\Am^\dag - \Bm^\dag\right) \bv \leq h(\gamma)   \bv^T \Am^\dag \bv.
$$
To prove the above inequality, it suffices to prove that 
\begin{equation}
    \label{eq:spec_goal}
     \left(\Am^\dag - \Bm^\dag\right) \Am \left(\Am^\dag - \Bm^\dag\right)  \preceq h(\gamma)  \Am^\dag.
\end{equation}
Multiplying the left and right sides of \Cref{eq:spec_goal} by $\Am^{\frac12}$, we get that (\ref{eq:spec_goal}) is  implied by
\begin{align}
\label{eq:spec_suff}
    \Am^{\frac12}\left(\Am^\dag - \Bm^\dag\right) \Am \left(\Am^\dag - \Bm^\dag\right)\Am^{\frac12}  \preceq h(\gamma) \Am^{\frac12} \Am^\dag \Am^{\frac12}.
\end{align}
Let $\Pi := \Am^{\frac12} \Am^\dag \Am^{\frac12}$ be the projection map onto the row space of $\Am$. Note that $\Pi = \Am^\dag\Am = \Am\Am^\dag$. Also, $\Pi = \Am^{\frac{\dag}{2}}\Am^{\frac{1}{2}} = \Am^{\frac{1}{2}}\Am^{\frac{\dag}{2}}$. These can be seen using the spectral decomposition. 
Now, the reason why \Cref{eq:spec_suff} implies \Cref{eq:spec_goal} is because if we can multiply both sides of (\ref{eq:spec_suff}) with one copy of $ \Am^{\frac{\dag}{2}}$ on the left and one copy of  $\Am^{\frac{\dag}{2}}$ on the right. Then \Cref{eq:spec_suff} becomes  $\Pi\left(\Am^\dag - \Bm^\dag\right) \Am \left(\Am^\dag - \Bm^\dag\right)\Pi \preceq h(\gamma) \Pi\Am^\dag \Pi.$ We have $\Pi (\Am^{\dag} - \Bm^{\dag}) = (\Am^{\dag} - \Bm^{\dag})\Pi  = \Am^{\dag} - \Bm^{\dag}$ because $\Am$ and $\Bm$ have the same null space. Similarly, $\Pi \Am^{\dag}  = \Am^{\dag} \Pi  = \Am^{\dag}$.

To prove (\ref{eq:spec_suff}), first rewrite it as
\begin{align*}
    \left(\Am^{\frac12}\left(\Am^\dag - \Bm^\dag\right) \Am^{\frac12} \right)^2   \preceq h(\gamma) \Pi,
\end{align*}
or equivalently
\begin{align*}
    \left(\Pi - \Am^{\frac12}\Bm^\dag\Am^\frac12 \right)^2   \preceq h(\gamma) \Pi.
\end{align*}
From the spectral approximation
$\left( 1 - \gamma \right) \Am
\preceq
\Bm
\preceq
\left( 1 + \gamma \right) \Am$, we deduce that
$$
\frac{1}{1 + \gamma} \Am^\dag
\preceq
\Bm^\dag
\preceq
\frac{1}{1-\gamma} \Am^\dag,
$$
which, when multiplying on the left and right by $\Am^{\frac12}$, implies that
$$
\frac{1}{1 + \gamma} \Pi
\preceq
\Am^{\frac12}\Bm^\dag\Am^{\frac12}
\preceq
\frac{1}{1-\gamma} \Pi.
$$
This in turn gives
$$
\frac{-\gamma}{1 + \gamma} \Pi
\preceq
\Pi - \Am^{\frac12}\Bm^\dag\Am^{\frac12}
\preceq
\frac{\gamma}{1-\gamma} \Pi.
$$
Observe that any eigenvector of $\Pi - \Am^{\frac12}\Bm^\dag\Am^{\frac12}$ is also an eigenvector of $\Pi$ (they share eigenspaces because $\Am$ and $\Bm$ have the same null spaces).
Moreover, the eigenvalues of $\Pi$ are 0 or 1. 
This implies that the eigenvalues of $\Pi - \Am^{\frac12}\Bm^\dag\Am^\frac12$ are all between $\frac{-\gamma}{1 + \gamma}$ and $\frac{\gamma}{1-\gamma}$. Hence, the eigenvalues of  $\left(\Pi - \Am^{\frac12}\Bm^\dag\Am^\frac12 \right)^2$ are all between 0 and  $ \frac{\gamma^2}{(1-\gamma)^2}$, and thus $\left(\Pi - \Am^{\frac12}\Bm^\dag\Am^\frac12 \right)^2   \preceq \frac{\gamma^2}{(1-\gamma)^2} \Pi$.

\end{proof}

\end{document}